\newcommand{\Ghz}{\texttt{Ghz}}
\newcommand{\GAR}{\operatorname{GAR}}          
\newcommand{\subGAR}{\operatorname{sub-GAR}}   
\newcommand{\FAR}{\operatorname{FAR}}          
\newcommand{\FTCR}{\operatorname{FTCR}}        
\newcommand{\subFTCR}{\operatorname{sub-FTCR}} 
\newcommand{\GDT}{\operatorname{GDT}}          
\newcommand{\IDT}{\operatorname{IDT}}          
\newcommand{\F}{{\bf F}}     
\newcommand{\G}{{\bf G}}     
\newcommand{\C}{{\bf C}}     
\newcommand{\V}{{\bf V}}     
\newcommand{\U}{{\bf U}}     
\newcommand{\A}{{\bf A}}     
\newcommand{\B}{{\bf B}}     
\newcommand{\E}{{\bf E}}     
\newcommand{\T}{T}           
\newcommand{\D}{\mathcal{D}} 
\newcommand{\tmin}{t_{\min}}
\newcommand{\tmax}{t_{\max}}
\newcommand{\Tgen}{T_{\operatorname{gen}}}
\newcommand{\Tchaff}{T_{\operatorname{chaff}}}
\newcommand{\Tvault}{T_{\operatorname{vault}}}
\newcommand{\minutia}{\mathfrak{m}}
\newcommand{\minutes}{~\textit{min}}
\newcommand{\seconds}{~\textit{sec}}
\newcommand{\hours}{~\textit{hours}}
\newcommand{\days}{~\textit{days}}
\newcommand{\months}{~\textit{months}}
\newcommand{\years}{~\textit{years}}
\newcommand{\quant}{\operatorname{quant}}
\newtheorem{thm}{Theorem}
\newtheorem{lemma}{Lemma}
\newtheorem{example}{Example}
\newtheorem{defn}{Definition}
\begin{document}

\title{Attacks and Countermeasures in Fingerprint\\Based Biometric Cryptosystems} 

\author{Benjamin Tams
\thanks{Benjamin Tams is with the Institute for Mathematical Stochastics, University of Goettingen, Goldschmidtstr. 7, 37077, Goettingen, Germany. Phone: +49-(0)551-3913515. Email: btams@math.uni-goettingen.de}
} 

\maketitle

\begin{abstract}
We investigate implementations of biometric cryptosystems protecting fingerprint templates (which are mostly based on the \emph{fuzzy vault scheme} by Juels and Sudan in 2002) with respect to the security they provide. We show that attacks taking advantage of the system's false acceptance rate, i.e. \emph{false-accept attacks}, pose a very serious risk --- even if \emph{brute-force attacks} are impractical to perform. Our observations lead to the clear conclusion that currently a single fingerprint is not sufficient to provide a secure biometric cryptosystem. But there remain other problems that can not be resolved by merely switching to multi-finger: Kholmatov and Yanikoglu in 2007 demonstrated that it is possible to break two matching vault records at quite a high rate via the \emph{correlation attack}.

We propose an implementation of a minutiae fuzzy vault that is inherently resistant against \emph{cross-matching} and the correlation attack. Surprisingly, achieving cross-matching resistance is not at the cost of authentication performance. In particular, we propose to use a randomized decoding procedure and find that it is possible to achieve a $\GAR=91\%$ at which no false accepts are observed on a database generally used. Our ideas can be adopted into an implementation of a multibiometric cryptosystem. All experiments described in this paper can fully be reproduced using software available for download.\footnote{\label{fn:thimble}Source code of the programs can be downloaded along with source code of a C++ library \texttt{thimble} from \url{http://www.stochastik.math.uni-goettingen.de/biometrics}.}
\end{abstract}

\section*{Keywords}
fingerprint, fuzzy vault, cryptanalysis, cross-matching, correlation attack

\section{Introduction}
In a traditional password-based authentication scheme, user names along with their respective passwords are stored on a server-side database. In such a scenario, we usually cannot prevent the fact that some persons will have access to the content of the database. Such persons (for example, system administrators) are thus able to read the password information related to enrolled users. To prevent them from using password information to impersonate authorized users, a non-invertible transformation\footnote{Non-invertible transformation of a password means that it is easy to transform the password, while on the contrary the derivation of a password from a given transformation is computationally
hard.} (e.g., a \emph{one-way hash function}) of each password is stored rather than the unprotected password. During authentication, the user sends his password to the service provider, which computes the password's non-invertible transformation. Next, the service provider compares the transformation that is stored in the database with the just-transformed password. If both agree, the authentication attempt is accepted; otherwise it is rejected. So a would-be thief cannot find the passwords in the database; he must either steal them from users or guess them.

However, a password can be forgotten or, if written down, can be stolen. To prevent these risks, many individuals attempt to create easily memorable passwords. Unfortunately, this often results in the individual choosing a weak password, typically constructed using personal information (e.g., a birthday or the name of a significant other), which increases the risk of others' guessing the password --- and therefore, of theft.

A popular alternative to password is to base authentication on biometrics such as fingerprints. Authentication protocols that incorporate biometric templates do not have the disadvantage that they can be forgotten or lost: Barring injuries, fingers are always with us; moreover, fingerprint features remain reasonably invariant over time.

If a biometric authentication scheme is in place the incorporated templates have especially to be stored protected. The situation is rather serious, because biometric templates may correspond to human beings with nearly unique precision compared to mere passwords. In addition, ineffective protection of biometric templates has consequences beyond breach of privacy, as compared to protecting passwords: For example, if a password is corrupted (e.g., discovered by others) it can be replaced easily compared to replacing a biometric template.

While the requirements for so-called \emph{biometric template protection schemes} are similar to those used for protecting passwords, they are more difficult to achieve: With high confidence it must be efficiently verifiable whether a provided biometric template matches the template that is encrypted by the stored data; furthermore, it must be computationally infeasible to derive the unencrypted biometric template from the stored data. There is one great difference between password and biometric authentication schemes: Contrary to passwords different measurements of the same biometric source will differ, while also having some reasonable similarity. These differences between two biometric templates of the same individual can be usefully conceptualized as deviations or errors. In this vein, there have been proposals for biometric template protections schemes that couple techniques from traditional cryptography with techniques from the discipline of \emph{error-correcting codes}.

\subsection{The Fuzzy Vault Scheme}

In 2002, Juels and Sudan proposed the \emph{fuzzy vault scheme} \cite{bib:JuelsSudan2002} which is a construction for protecting noisy data. While the \emph{fuzzy commitment scheme} \cite{bib:JuelsWattenberg1999}, which was proposed by Juels and Wattenberg in 1999, requires the data to be presented as a fixed-length feature vector, the fuzzy vault scheme allows the length of the data to vary and the data to be unordered. These properties enable the fuzzy vault scheme to protect fingerprint templates such as fingerprint minutiae. It works as follows.

\subsubsection*{Enrollment}
Given a fingerprint template containing $t$ fingerprint features, e.g., minutiae, its elements are encoded as elements $x$ in a fixed finite field $\F$. One chooses a secret message polynomial $f\in\F[X]$ in the indeterminate $X$ of degree smaller than $k$ and evaluates $f(x)$ at the encoded element $x$. The genuine pairs $(x,f(x))$ are dispersed among a large set of chaff points that do not lie on the graph of $f$, such that a vault of size $n$ is built.

\subsubsection*{Authentication}
Using a second genuine template one aims at distinguishing the genuine points from the chaff points. Given the points are mainly genuine, one can tolerate errors within certain limits determined by error-correcting codes \cite{bib:PetersonGorensteinZierler1960,bib:Berlekamp1966,bib:Massey1969,bib:Gao2002}.

\subsubsection*{Security}
From the difficulty of the problem of distinguishing genuine from chaff (without the help of a second genuine template) the fuzzy vault scheme draws its security. This problem can be reduced from the \emph{polynomial reconstruction problem} which is believed to be hard in general if $t\ll\sqrt{(k-1)\cdot n}$ \cite{bib:Sudan1997,bib:GuruswamiSudan1998,bib:BleichenbacherNguyen2000,bib:GuruswamiVardy2005,bib:KiayiasYung2008}.

\subsection{The Fuzzy Fingerprint Vault}
There are several biometric traits which can be used in biometric authentication systems (see \cite{bib:JainFlynnRoss2007} for an overview). Every biometric discipline is associated with very individual challenges which have to be solved before incorporating them into a biometric template protection scheme. One biometric trait that can be extracted from humans are his \emph{fingerprints} \cite{bib:Handbook2009}. This papers focuses on fingerprints and examines their adequacy of being protected by the fuzzy vault scheme. Even though there are other biometric template protection schemes \cite{bib:DodisEtAl2008} and implementations for fingerprints \cite{bib:ArakalaJeffersHoradam2007} the fuzzy vault scheme is the most dominant and promising scheme for which implementations to protect fingerprints have been proposed. 

\subsubsection*{Implementations}
Several fuzzy vault variants for protecting fingerprint minutiae templates can be found in the literature \cite{bib:ClancyKiyavashLin2003, bib:YangVerbaudwhede2005, bib:UludagPankantiJain2005, bib:UludagJain2006, bib:NandakumarJainPankanti2007, bib:NandakumarNagarJain2007,bib:Nagar2008,bib:Nagar2010}.

To increase vault practicability as well as security, Nagar et al. (2008, 2010) \cite{bib:Nagar2008,bib:Nagar2010} proposed to fuse a fingerprint's minutiae template with information about its ridge orientation and frequency by means of \emph{minutiae descriptors} \cite{bib:Feng2008}. In distinguishing genuine from chaff minutiae an attacker has in addition to guess the respective minutiae descriptors. This adds some security to the base vault implementation.

All the implementations above require an \emph{alignment step} where the query minutiae templates are aligned to the vault. This is very challenging since the enrolled templates are protected. Currently, the alignment is realized by techniques using auxiliary alignment data, e.g., see \cite{bib:UludagJain2006,bib:NandakumarJainPankanti2007}. The use of auxiliary alignment data, however, may cause security issues resulting in the leakage of information from the protected fingers.

Another interesting approach is to use \emph{alignment-free features}, i.e. features that do not dependent of the finger's rotation or displacement. Li et al. (2010) \cite{bib:LiEtAl2010} proposed to fuse \emph{minutiae local structures} \cite{bib:JiangYau2000} with \emph{minutiae descriptors} \cite{bib:Feng2008} and to protect them by the fuzzy vault scheme. The recognition performance that the authors report is promising and the error-prone step of aligning the query fingerprint to the vault is circumvented. Furthermore, the problem of information leakage by auxiliary alignment data does not exist anymore.

\subsection{Content and Contribution of the Paper}
After we described the functioning of a minutiae fuzzy vault in more detail (Section \ref{sec:MinutiaeFuzzyVault}) we investigate the security of implementations from the literature in different attack scenarios (Section \ref{sec:Attacks}). Reproducing the work of Mih\u{a}ilescu et al. (2009) \cite{bib:MihailescuMunkTams2009} we show that brute-force attacks can  be very practical to perform against most implementations (see Section \ref{sec:BFAttack}).

But even if brute-force attacks are infeasible to perform, there remains the possibility of the attacker to run an attack that takes it advantage out of the system's \emph{false acceptance rate}, i.e. \emph{false-accept attack}; also see Section 26.6.1.1 in \cite{bib:Biometrics2009}. We show that false-accept attacks are even much easier to perform than brute-force attacks (see Section \ref{sec:FARAttack}). Note that the false-accept attack is not restricted to the fuzzy vault scheme but it can be applied with virtually no modifications to every authentication scheme. Its attack success rate only depends on the system's false acceptance rate and the average time needed to run an impostor recognition attempt. Therefore, our observations clearly advocate that biometric cryptosystems merely based on a single finger cannot provide effective security. Rather multi-finger cryptosystems \cite{bib:MerkleEtAl2010b} (or even \emph{multibiometric cryptosystems} \cite{bib:NagarNandakumarJain2012}) should be developed.

For the fuzzy vault scheme there remains a problem that can not be solved merely by switching to multibiometrics. Given two matching instances of a minutiae fuzzy vault to an adversary he can correlate them; genuine minutiae tend to agree well in comparison to chaff minutiae, which are likely to be in disagreement. Thus, an intruder may reliably determine whether two vault records match, i.e. \emph{cross-matching}. Even worse, via correlation the adversary can try to distinguish genuine minutiae from chaff minutiae. If in this way a set of vault minutiae can be extracted that contains a reasonable proportion of genuine minutiae, then the vault can efficiently be broken. Consequently, this attack is called \emph{correlation attack}. Scheirer and Boult (2007) were the first who have drawn the attention to the risk of attacking fuzzy vault via record multiplicity \cite{bib:ScheirerBoult2007}. Then Kholmatov and Yanikoglu (2008) have demonstrated the practicability of the correlation attack \cite{bib:KholmatovYanikoglu2008}. Therefore, in Section \ref{sec:CMRFV}, we propose an implementation of a minutiae fuzzy vault that is inherently resistant against cross-matching and the correlation attack. Fortunately, cross-matching resistance can be achieved without decreasing the verification performance as we found in a test on a fingerprint database publicly available (see Section \ref{sec:CMRFVEvaluation}). This is mainly due to a randomized decoding procedure that we propose.

A final discussion, conclusion, and an outlook are given in Section \ref{sec:Discussion}.

All experiments described in this paper can fully be reproduced using software that we made available for download.\textsuperscript{\ref{fn:thimble}}

\section{Minutiae Fuzzy Vault Implementation}
\label{sec:MinutiaeFuzzyVault}
Assume tha we are given a minutiae template $\{(a,b,\theta)\}$ where $(a,b)$ and $\theta$ denote its position and angle, respectively. Using the fuzzy vault scheme we may protect the template as follows.

\subsection{Enrollment}
We describe the vault construction analogous to Nandakumar et al. (2007) \cite{bib:NandakumarJainPankanti2007} with some minor modifications. 

As in \cite{bib:NandakumarJainPankanti2007}, only well-separated minutiae are selected. Furthermore, only the $t\leq\tmax$ minutiae of best quality that are well-separated are selected for vault construction. If it is not possible to select at least a certain number of $\tmin$ minutiae, the enrollment is aborted and a \emph{failure to capture} is reported. Otherwise, the construction continues as follows. To hide the selected \emph{genuine minutiae} $\Tgen$, a set of \emph{chaff minutiae} $\Tchaff$ is generated at random fulfilling the following side conditions: First, each chaff minutia has the property that it is well-separated from all other vault minutiae --- genuine and chaff; second, a chaff minutia's position lays within the corresponding fingerprint image's region; third, the number of chaff minutiae is such that the vault minutiae reach a predefined size $n\geq t$, i.e. $n-t$ chaff minutiae are generated. The union of genuine and chaff minutiae is referred to as the \emph{vault minutiae} $\Tvault=\Tgen\cup\Tchaff$.

After the vault minutiae have been established, a secret is encoded as a polynomial $f$ of degree $<k$ having coefficients in a fixed finite field $\F=\mathbb{F}_q$ of size $q\geq n$. Now, list the vault minutiae as $(a_0,b_0,\theta_0),\hdots,(a_{n-1},b_{n-1},\theta_{n-1})$ by some convention, e.g., by sorting them in lexicographical order. By $x_0,\hdots,x_{n-1}\in\F$ denote $n$ distinct elements of the finite field. In this way, each list index $i=0,\hdots,n-1$ uniquely encodes an element in $\F$. Now we build the \emph{genuine set} as $\G =\{(x_i,f(x_i))~|~(a_i,b_i,\theta_i)\in \Tgen\}$.
Analogously, the \emph{chaff set} is defined as $\C=\{(x_j,y_j)~|~(a_j,b_j,\theta_j)\in \Tchaff\}$ where the $y_j$ s are chosen uniformly at random such that $y_j\neq f(x_j)$. The union $\V=\G\cup\C$ builds the \emph{vault points}.

The protected template is published as the triple $(\V,T_{\operatorname{vault}},h(f))$ where $h(f)$ denotes a cryptographic hash value of $f$ (e.g., SHA-1) to allow safe recovery of $f$ at genuine authentication.

Note, that there is a one-to-one correspondence between the vault $\V$ and the vault minutiae $\Tvault$. Thus, given a genuine minutia we also know its corresponding vault point and vice versa. In this respect, our construction is different from the construction of Nandakumar et al. (2007) who encode the minutiae information on the $x$-coordinate of its corresponding vault point. Another difference of our construction is that we use a SHA-1 hash value instead of constituting the secret polynomial with redundancy bits.

\begin{figure}[!ht]
\subfigure[]{\includegraphics[width=0.5\textwidth]{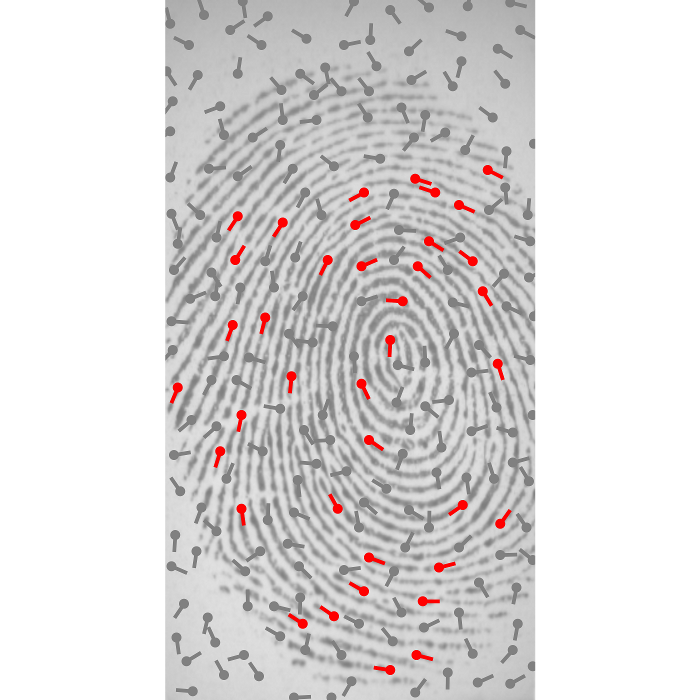}}\subfigure[]{\includegraphics[width=0.5\textwidth]{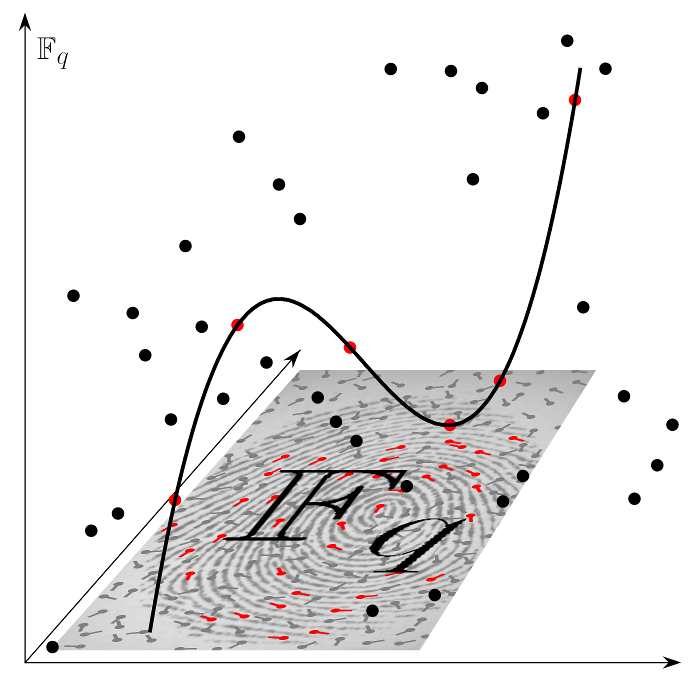}}
\caption{(a) Genuine (red) and chaff minutiae (gray); (b) each minutia is encoded on a vault point's abscissa where its ordinate binds the minutia to the secret polynomial} 
\end{figure}

\subsection{Authentication}
\label{sec:OrdinaryDecoder}
On authentication, a query template of the (alleged) genuine user is provided. As on enrollment, only well-separated minutiae of good quality are selected (again, at most $\tmax$). For simplicity, we assume that the query minutiae are correctly aligned to the vault. We extract those vault minutiae that are well approximated by aligned query minutiae. In this way, we establish the \emph{unlocking set} $\U$ which consists of those vault points that correspond to the just extracted minutiae. Let $t$ be the size of the unlocking set $\U$. There are $\left(t\atop k\right)$ combinations of selecting $k$ different unlocking points. For each combination, the interpolation polynomial $f^*\in\F[X]$ is computed and it is checked whether its hash value agrees with the hash value of the correct polynomial, i.e. if $h(f^*)=h(f)$. If true then $f^*=f$ with overwhelming reliability and $f^*$ is output as the correct polynomial which corresponds to a successful authentication. Otherwise, if all $h(f^*)\neq h(f)$ the authentication attempt is rejected.

\subsection{Alignment}
\label{sec:alignment}
There have been proposals to ease aligning the query minutiae to the vault such that matching vault minutiae agree with their respective query minutiae (see \cite{bib:YangVerbaudwhede2005,bib:JeffersArakala2007,bib:UludagJain2006,bib:NandakumarJainPankanti2007,bib:LiEtAl2008}). All of these proposals leak information about the corresponding fingerprint, e.g., about some of its minutiae or its orientation field. Moreover, it is not clear to what extent auxiliary alignment data can help an adversary to find matching vault correspondences via cross-matching (see Section \ref{sec:CorrelationAttack}).

Ideally, fingerprints can be pre-aligned such that matching query minutiae already agree with genuine vault minutiae. However, pre-alignment is currently not very robust. But increasing robustness of fingerprint pre-alignment would automatically increase the practicability of a minutiae fuzzy vault implementation without decreasing its overall security. Therefore, although challenging, it seems to be worth to search for more robust pre-alignment procedures. Alternatively, suitable alignment-free features can be used for constructing the vault (see \cite{bib:LiEtAl2010}).

Due to open questions related to vault alignment, if we investigate vault performances, we assume a well-solved alignment framework for genuine authentication. Consequently, if an authentication of a genuine user is simulated, the alignment is obtained by aligning the query minutiae template to the enrolled template in clear. On an impostor authentication, we do not make any attempts in aligning the query template to the vault.

\subsection{Evaluation Database and Protocol}
\label{sec:DatabaseAndProtocol}
Throughout, we used the FVC 2002 DB2 database\footnote{The database consists of $8$ impressions each acquired from a total of $100$ fingers.} for our performance evaluations as it is the common database used to evaluate fingerprint fuzzy vault implementations. 

We strictly follow the FVC protocol \cite{bib:FVC2002} even though in the literature the implementations are evaluated following a protocol in where the number of observed impostor recognition attempts is artificially increased \cite{bib:UludagPankantiJain2005,bib:UludagJain2006,bib:NandakumarJainPankanti2007,bib:Nagar2008,bib:Nagar2010,bib:LiEtAl2010}. But this would not correspond to statistically independent observations.

As already described in Section \ref{sec:alignment}, on an genuine authentication attempt we assert that the query finger is correctly aligned by aligning both fingers in clear; for an impostor recognition attempt, we do not make any attempts for alignment.

Genuine acceptance rates and false acceptance rates will be denoted by $\GAR$ and $\FAR$, respectively. Furthermore, throughout the literature the genuine acceptance rate incorporating the first two impression of each fingers only are reported. This corresponds to the scenario in where the fingerprints are of good quality which positively affects the genuine acceptance rates. Therefore, to allow for comparing our genuine acceptance rates with other implementations, we will also keep track of the genuine acceptance rate w.r.t. the subset of the database. The corresponding genuine acceptances rates are indicated by $\subGAR$.

The minutiae templates that we used have been obtained using a commercial extractor.\footnote{Verifinger SDK 5.0 \cite{bib:VERIFINGER}}

\subsection{Performance Evaluation}
\label{sec:MinutiaeFuzzyVaultEvaluation}

\begin{table*}[btp]
\begin{center}
\caption{Performance Evaluation of our minutiae fuzzy vault re-implementation using parameters adopted from Nandakumar et al. (2007) \cite{bib:NandakumarJainPankanti2007}}
\begin{footnotesize}
\label{tab:MinutiaeFuzzyVaultPerformance}
\begin{tabular}{|c||c|c|c|c|}
\hline
                   &                                    &                       &                            &                             \\
 polynomial degree & genuine                            & false                 & avg.                       & avg.                        \\
                   & acceptance                         & acceptance            & genuine                    & impostor                    \\
                   & rate                               & rate                  & decoding                   & decoding                    \\
                   &                                    &                       & time                       & time                         \\
 $<k$              & $\GAR$ ($\subGAR$)                 & $\FAR$                & $\GDT$                     & $\IDT$                      \\
                   &                                    &                       &                            &                             \\
\hline\hline
                   &                                    &                       &                            &                             \\
$=7$               & $\approx 86.54\%$ ($\approx 96\%$) & $\approx 3.87\%$      & $\approx 0.05\seconds$     & $\approx 0.08\seconds$      \\
$=8$               & $\approx 80.76\%$ ($\approx 92\%$) & $\approx 1.63\%$      & $\approx 0.121\seconds$    & $\approx 0.140\seconds$     \\
$=9$               & $\approx 74.61\%$ ($\approx 92\%$) & $\approx 0.56\%$      & $\approx 0.226\seconds$    & $\approx 0.198\seconds$     \\
$=10$              & $\approx 67.52\%$ ($\approx 92\%$) & $\approx 0.16\%$      & $\approx 0.351\seconds$    & $\approx 0.240\seconds$     \\
$=11$              & $\approx 58.93\%$ ($\approx 91\%$) & $\approx 0.10\%$      & $\approx 0.5\seconds$      & $\approx 0.248\seconds$     \\
$=12$              & $\approx 51.07\%$ ($\approx 87\%$) & $=0\%$                & $\approx 0.546\seconds$    & $\approx 0.193\seconds$     \\
                   &                                    &                       &                            &                             \\
\hline
\end{tabular}
\end{footnotesize}
\end{center}
\end{table*}

We evaluated the vault performances for different $k$ on the FVC 2002 DB2 database following the FVC protocol (see \cite{bib:FVC2002}) using parameters adopted from Nandakumar et al. (2007). They propose to hide at most $\tmax=24$ and at least $\tmin=18$ well-separated\footnote{Two minutiae $(a,b,\theta)$ and $(a',b',\theta')$ are said to be well-separated if $\|(a,b)-(a',b')\|_2+0.2\cdot\max(|\theta-\theta'|,|360^\circ-\theta+\theta'|)> 25$.} genuine minutiae in a vault of size $n=224$.
\begin{example}
For example, if $k=9$ then the genuine acceptance rate was determined as $\GAR\approx 74.61\%$ and the false acceptance rate as $\FAR\approx 0.56\%$. The total number of genuine authentication attempts and impostor authentication attempts was $2749$ and $4856$, respectively. $\FTCR\approx 3.88\%$ of the enrollments were aborted, because it was not possible to select at least $\tmin=18$ well-separated minutiae.

The genuine acceptance rate measured on the finger's respective first two impressions was found to be $\subGAR\approx 92\%$ at a failure to capture rate of $\subFTCR=1\%$.  
\end{example}
Note that our rates differ from those reported in \cite{bib:NandakumarJainPankanti2007}. This is due to the use of a different authentication scheme than \cite{bib:NandakumarJainPankanti2007} as well of the fact that we decoupled the alignment from the vault.

We will use the re-implementation to demonstrate the effectiveness of the false-accept attack in Section \ref{sec:FARAttack}. Note, even in the case that we would use auxiliary automatic alignment data to ease alignment, an attacker does not have to account for it; our false acceptance rates reflect the success rate of such a corresponding attack.   

\subsection{Fuzzy Vault with Minutiae Descriptors}
\label{sec:HybridVault}
To improve the practicability as well as the security of the construction in \cite{bib:NandakumarJainPankanti2007}, in addition to mere minutiae, Nagar et al. (2008, 2010) \cite{bib:Nagar2008,bib:Nagar2010} proposed to incorporate minutiae descriptors in constructing the vault. 

A minutia's descriptors consists of the ridge orientation (relative to the orientation of the minutia) and ridge frequency of points arranged around the minutia (see Fig. \ref{fig:MinutiaDescriptor}). The authors showed how a minutia descriptor can be quantized as an $m$-bit vector $w\in\{0,1\}^m$. Furthermore, the corresponding vault points $(x,y)\in\F\times\F$ ordinate value $y$ is encoded as a codeword $c(y)$ of a binary error-correcting code of length $m$ which is capable in correcting $\nu$ (say) errors. The \emph{fuzzy commitment} \cite{bib:JuelsWattenberg1999} of $c(y)$ using the witness $w$ is computed next, i.e. $c(y)+w$.\footnote{Addition is performed bitwise modulo $2$ which is equivalent to a bitwise xor operation.} Rather publishing the vault point $(x,y)$ the tuple $(x,c(y)+w)$ is published instead. For chaff points the ordinate values are protected using random descriptor binarizations from a pool of chaff descriptors.

\begin{figure}[!ht]
\includegraphics[width=0.48\textwidth]{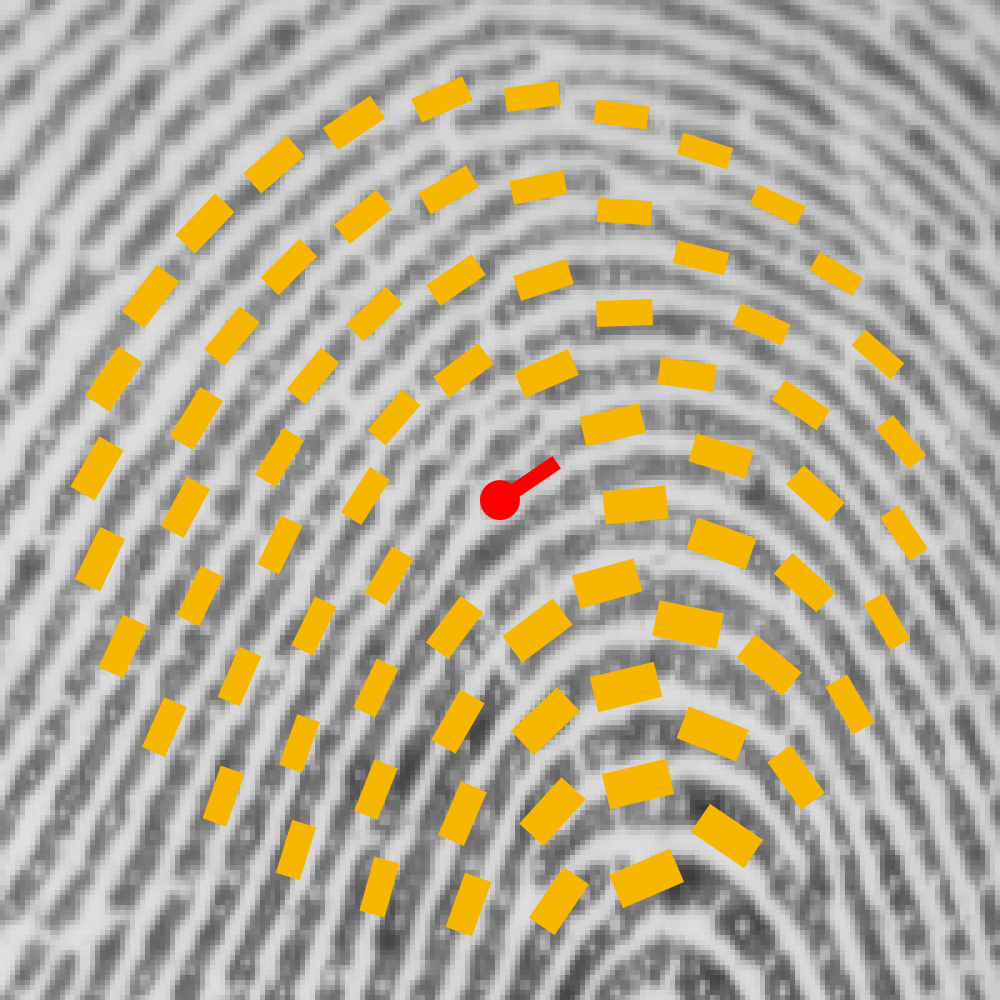}\hspace{0.01\textwidth}\includegraphics[width=0.48\textwidth]{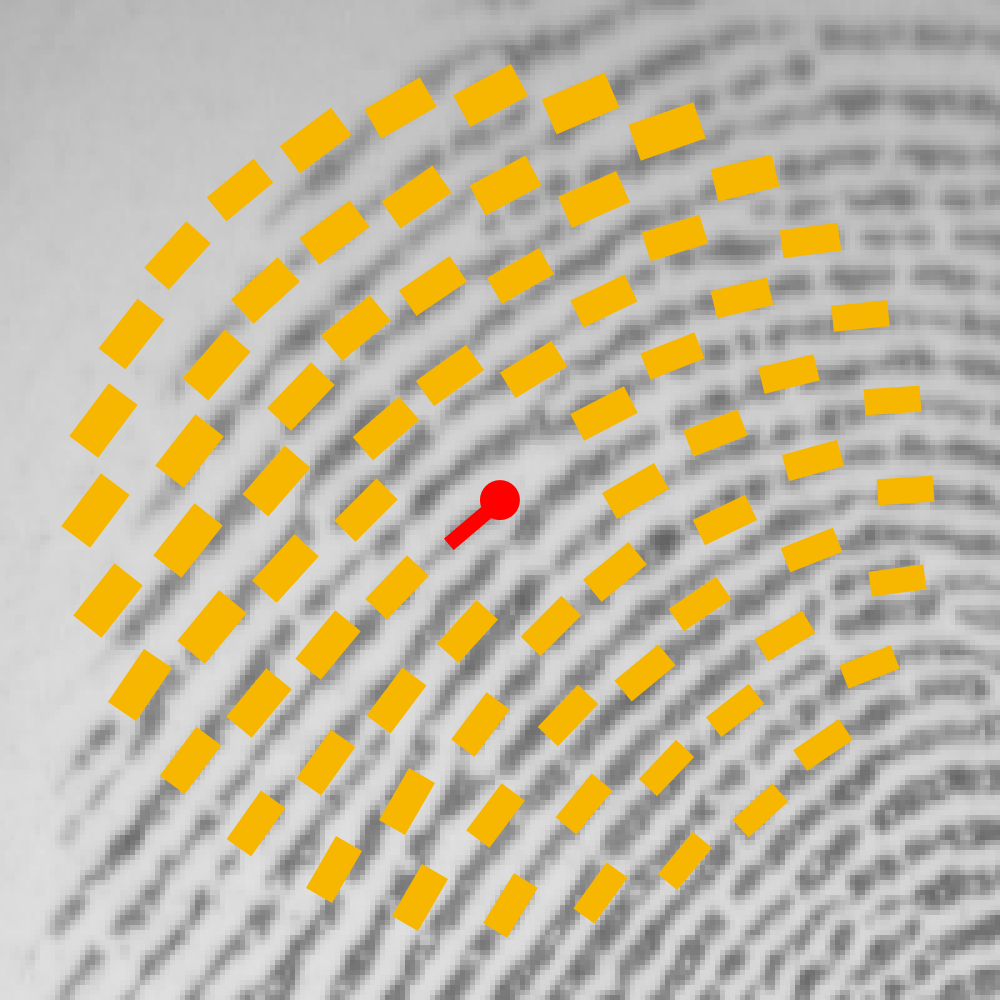}
\caption{\emph{Minutiae descriptors} --- thickness and orientation of yellow lines correspond to ridge frequency and orientation descriptor, respectively; the orientation fields and frequency images to visualize were estimated using the methods in \cite{bib:GottschlichMihailescuMunk2009} and \cite{bib:Gottschlich2012}, respectively.}
\label{fig:MinutiaDescriptor}
\end{figure}

On authentication, the unlocking points $(x,c(y)+w)$ are extracted as in the basic vault. Using the minutiae descriptor $w'\in\{0,1\}^m$ of the corresponding query minutia, the difference $c(y)+w-w'$ is computed.\footnote{Note that addition modulo $2$ is the same as subtraction modulo $2$, i.e. $c(y)+w-w'=c(y)+w+w'$.} If $w'$ is sufficiently similar to $w$, i.e. if they differ in at most $\nu$ positions, the difference $c(y)+w-w'$ can be corrected to $c(y)$ which encodes the correct $y$. Therefore, in addition to sufficiently many genuine vault points among the unlocking set it is required that their correct ordinate values can be recovered. This will be the case, if a minutia descriptor with sufficient similarity to the genuine descriptor can be found. Thus, for the vault to successfully unlock there is required more agreeing information of the query template to the enrolled template and thus the basic vault's security is improved.

Next, we investigate the security of different fingerprint fuzzy vault implementations from the literature.

\section{Attacks}
\label{sec:Attacks}

\subsection{Brute-Force Attack}
\label{sec:BFAttack}

While attack scenarios involving \emph{brute-force attacks} are analyzed throughout the literature they frequently lack of emphasizing how practical these naive attacks can be. In this section, for parameters adopted from implementations found in the literature we determine the expected number of computer time that is expected to be required for a successful brute-force attack. Therefore, we briefly reproduce the work of Mih\u{a}ilescu et al. (2009) \cite{bib:MihailescuMunkTams2009} to emphasize the practicability of brute-force attacks against current implementations of the fuzzy fingerprint vault; for a smart polynomial reconstruction approach we refer to Choi et al. (2011) \cite{bib:ChoiEtAl2011}. Afterwards we modify the attack for the implementation of Nagar et al. (2008, 2010) \cite{bib:Nagar2008,bib:Nagar2010}.

Assume that an intruder has intercepted a vault of size $n$ in which $t$ genuine vault points are contained laying on the graph of a common polynomial of degree $<k$. Furthermore, we assume that a cryptographic hash value $h(f)$ of the correct polynomial is publicly available to the adversary. To find the correct polynomial, the intruder 1) may guess $k$ random vault points, 2) determine its interpolation polynomial $f^*$, and 3) check whether $h(f^*)=h(f)$: If true, the attacker has found the correct polynomial with overwhelming reliability; otherwise, he repeats the attack until a polynomial $f^*$ with $h(f^*)=h(f)$ is found.

The probability that a random choice of $k$ vault points yields the correct polynomial is ${\bf bf}(n,t,k)^{-1}$ where
\begin{equation}
{\bf bf}(n,t,k)=\begin{pmatrix}n\\k\end{pmatrix}\cdot\begin{pmatrix}t\\k\end{pmatrix}^{-1}.
\end{equation}
Thus, after
\begin{equation}
\label{eq:BFSuccessTime}
\log(0.5)/\log(1-{\bf bf}(n,t,k)^{-1})
\end{equation}
iterations the adversary can expect to find the correct polynomial.

\begin{example}
For example, if $(n,t,k)=(224,24,9)$ (which are parameters as proposed by Uludag and Jain 2006 \cite{bib:UludagJain2006}) the adversary can expect to successfully break the vault after $\approx 2^{31}$ iterations using Formula (\ref{eq:BFSuccessTime}). For $\F=\mathbb{F}_{2^{16}}$ we experimentally determined that it is possible to perform $148,634.1$ iterations in one second of the above brute-force attack. Thus, if four processors/cores are used by the attacker he can expect to be successful after $\approx 49\minutes$.
\end{example}

\begin{table}
\begin{center}
\caption{Expected computational timings for running a successful brute-force attack on a 3.2 Ghz desktop computer with four processor cores against different vault parameters found in the literature}
\label{tab:BFAttack}
\begin{tabular}{|c||c|c|c|c|}
\hline
                                       &               &                  &             &                       \\
para-                                  &               &                  & iterations  & {\bf expected}        \\
meters                                 & $(n,t,k)$     & security         & per second  & {\bf time for}        \\
as in                                  &               &                  & per core    & {\bf success}         \\
                                       &               &                  &             &                       \\
\hline\hline
                                       &               &                  &             &                       \\
 \cite{bib:UludagPankantiJain2005}     & $(218,18,9)$  & $\approx 2^{36}$ & $151,316.5$ & $<17\hours$           \\
 \cite{bib:UludagJain2006}             & $(224,24,9)$  & $\approx 2^{31}$ & $148,634.1$ & $<50\minutes$         \\
 \cite{bib:NandakumarJainPankanti2007} & $(224,24,8)$  & $\approx 2^{27}$ & $183,188.8$ & $<3\minutes$          \\
 \verb+"+                              & $(224,24,9)$  & $\approx 2^{31}$ & $148,634.1$ & $<50\minutes$         \\
 \verb+"+                              & $(224,24,11)$ & $\approx 2^{39}$ & $109,066.9$ & $<11\days$            \\
                                       &               &                  &             &                       \\
\hline
                                       &               &                  &             &                       \\
 \cite{bib:LiEtAl2008}                 & $(440,40,13)$ & $\approx 2^{48}$ & $82,056.84$ & $<18\years$           \\
 \verb+"+                              & $(440,40,14)$ & $\approx 2^{52}$ & $69,227.56$ & $<325\years$          \\
                                       &               &                  &             &                       \\
\hline
\end{tabular}
\end{center}
\end{table}

We empirically determined expected times for a successful brute-force attack for parameters from different implementations from the literature. The results are listed in Table \ref{tab:BFAttack}. We find that brute-force attacks can become practical to perform easily --- even on a standard desktop computer. Related work can be found in \cite{bib:MihailescuMunkTams2009,bib:ChoiEtAl2011}.

\subsection{Attack against Fuzzy Vault with Minutiae Descriptors}
To break instances of the implementation of Nagar et al. (2008,2010) \cite{bib:Nagar2008,bib:Nagar2010} the attacker must act differently in choosing a candidate polynomial $f^*$ because the vault points ordinate values are protected. Therefore, we assume that the adversary has access to a large pool of minutiae descriptors.

\subsubsection{Decoupling the Vault from Protected Ordinate Values}
\label{sec:DecoupleMinutiaeDescriptorsFromTheVault}
For each protected vault point $(x,c(y)+w)$ (chaff and genuine) the attacker iterates through the descriptor pool. For each descriptor $w'\in\{0,1\}^m$ the difference $c(y)+w-w'$ is computed and then an attempt is made to decode $c(y)+w-w'$ to its nearest codeword $c(y')$. There are three possible cases:
\begin{itemize}
\item[i)]The attacker can correct to the right $c(y)$ and thus obtains the correct ordinate value $y$;
\item[ii)]he obtains another codeword $c(y')\neq c(y)$ and thus an incorrect ordinate value $y'\neq y$;
\item[iii)]the difference cannot be corrected to any codeword.
\end{itemize}
While iterating through the descriptor pool, the attacker establishes a set of candidate ordinate values. For simplicity, we assume that the correct ordinate value can be found in the candidate set for each vault point. By $\{y'\}$ denote such a candidate set. For the attacker to select a candidate polynomial, he may randomly choose $k$ distinct vault points and, in addition, for each vault point a random candidate ordinate value.

To estimate the probability that such a candidate polynomial yields the correct polynomial, we first estimate the expectation of the size of the candidate set for random vault points $(x,c(y)+w)$. 

An important tool to achieve this is the \emph{sphere packing density} of the underlying binary error-correcting code, which can be defined to be the probability that a random $m$-bit word can be corrected to a valid codeword. Therefore, by $\ell$ denote the number of codewords. Then its sphere packing density is
\begin{equation}
\rho=2^{\ell-m}\sum_{j=0}^\nu\begin{pmatrix}m\\j\end{pmatrix}
\end{equation}
where $\nu$ denotes the code's error-correcting capability. We argue with \cite{bib:Nagar2010} that the difficulty in guessing a random minutiae descriptor is $R\approx 4.27$. Therefore, we estimate the expectation of the number of of candidate ordinate values for each vault point as $S=1+(R-1)\cdot\rho$. Thus, we estimate the brute-force security as
\begin{equation}
S^k\cdot{\bf bf}(n,t,k).
\end{equation}

\subsubsection{Evaluation of the Attack}
\label{sec:HybridVaultBFAttack}
For the implementation of \cite{bib:Nagar2010} in where the ordinate values are protected via a $(511,19)$-BCH code, which can correct $\nu=119$ errors, the corresponding sphere packing density is $\rho\approx 1.3\cdot 10^{-29}$. Thus, we expect the number of a vault point's candidate ordinate values to be $S=1+(R-1)\cdot\rho\approx 1+4.25\cdot 10^{-27}$. Consequently, the estimated brute-force security for vault parameters $(n,t,k)=(224,24,9)$ is $S^k\cdot{\bf bf}(n,t,k)\approx 2.54\cdot 10^{9}$ which corresponds to $31$ bits. In comparison, the brute-force security for the base implementation without protected ordinate values is ${\bf bf}(n,t,k)\approx 2.54\cdot 10^9$ which is almost the same. Thus, there is virtually no improvement in protecting the vault points ordinate values if the code's sphere packing density is too small. As a countermeasure, the use of two other BCH codes of higher sphere packing density have been investigated in \cite{bib:Nagar2010} with a measurable improvement in the brute-force security. The corresponding evaluations can be found in Table \ref{tab:BFAttackMD}. 

\begin{table}[!ht]
\begin{center}
\caption{Brute-force securities of the implementation of Nagar et al. (2010)  for different choices of BCH codes and for different polynomial degrees. The genuine acceptance rates have been extracted from Figure 7 in \cite{bib:Nagar2010} in where the false acceptance rates have been indicated as to be very close to $0$.}
\label{tab:BFAttackMD}
\begin{tabular}{|c||c|c||c|c||c|c|}
\hline
           & \multicolumn{2}{c||}{}                   & \multicolumn{2}{c||}{}                   & \multicolumn{2}{c|}{}                    \\
           & \multicolumn{2}{c||}{BCH$(511,19)$}      & \multicolumn{2}{c||}{BCH$(31,6)$}        & \multicolumn{2}{c|}{BCH$(15,5)$}         \\
           & \multicolumn{2}{c||}{}                   & \multicolumn{2}{c||}{}                   & \multicolumn{2}{c|}{}                    \\
\hline
           &                  &                       &                  &                       &                  &                       \\
poly-      & brute-           &                       & brute-           &                       & brute-           &                       \\
nomial     & force            & $\operatorname{sub-}$ & force            & $\operatorname{sub-}$ & force            & $\operatorname{sub-}$ \\
degree     & security         & $\GAR$                & security         & $\GAR$                & security         & $\GAR$                \\
           &                  &                       &                  &                       &                  &                       \\
\hline
           &                  &                       &                  &                       &                  &                       \\
 $k=7$     & $\approx 2^{24}$ & $95\%$                & $\approx 2^{27}$ & $94\%$                & $\approx 2^{34}$ & $93\%$                \\
 $k=8$     & $\approx 2^{27}$ & $94\%$                & $\approx 2^{31}$ & $93\%$                & $\approx 2^{40}$ & $93\%$                \\
 $k=9$     & $\approx 2^{31}$ & $93\%$                & $\approx 2^{35}$ & $93\%$                & $\approx 2^{45}$ & $91\%$                \\
 $k=10$    & $\approx 2^{35}$ & $89\%$                & $\approx 2^{39}$ & $87\%$                & $\approx 2^{50}$ & $85\%$                \\
 $k=11$    & $\approx 2^{39}$ & $84\%$                & $\approx 2^{44}$ & $81\%$                & $\approx 2^{56}$ & $76\%$                \\
 $k=12$    & $\approx 2^{43}$ & $78\%$                & $\approx 2^{48}$ & $77\%$                & $\approx 2^{61}$ & $73\%$                \\
           &                  &                       &                  &                       &                  &                       \\
\hline
\end{tabular}
\end{center}
\end{table}

\subsection{False-Accept Attack}
\label{sec:FARAttack}
Brute-force attacks can definitely be improved. For example, one may use statistics of fingerprints to accelerate brute-force attacks. Assuming that the statistics of fingerprints is best reproduced by real fingers, making heuristic considerations one may conclude that an attack that takes advantage out of the system's false-acceptance rate $\FAR$ yields the system's overall security. In any case, such an attack yields an upper bound of the system's overall security and is a hint for the existence of a similar efficient statistical attack.

In the scenario of a false-accept attack we assume that an adversary who has intercepted a vault also has access to a sufficiently large database containing fingerprint templates. 

Then the adversary may try to recover the protected template from the vault off-line by simulating authentication attempts using the templates in the database as the queries. For a random query template, with probability $\FAR$ the vault will unlock and reveal the protected key and template. Thus, the adversary can expect to successfully break the vault after he has simulated $\log(0.5)/\log(1-\FAR)$ authentication attempts. If the average impostor decoding time $\IDT$ is known then the computational cost for a successful false-accept attack can be estimated as
\begin{equation}
\label{eq:FAASuccessTime}
\log(0.5)/\log(1-\FAR)\cdot\IDT.
\end{equation}

\begin{example}
\label{example:FARPointEstimation}
For example, assume that an adversary has intercepted a minutiae fuzzy vault as in Section \ref{sec:MinutiaeFuzzyVault} with $k=9$. With Table \ref{tab:MinutiaeFuzzyVaultPerformance} we may assume that $\FAR\approx 0.56\%$ and $\IDT\approx 0.198\seconds$. Thus, the adversary can expect to successfully break the vault after only $\approx 24.6\seconds$. If four processors/cores are used in parallel the time furthermore reduces to approximately $6.15\seconds$. In comparison to the brute-force attack, which takes $\approx 49\minutes$ on the same computer, the false-accept attack turns out to be the better choice for the adversary and thus poses the more serious risk.
\end{example}

\subsubsection{Confidence of the False Acceptance Rate}
In the above example we assumed that the false acceptance rate was $\FAR\approx 0.56\%$. This is because we observed $27$ false accepts among $4,856$ simulated impostor authentication attempts and thus $\FAR=27/4,856\approx 0.56\%$. But actually, the observation of a false accept is the result of a random sample.

Assume that we observed $s$ false accepts among $N$ impostor recognition attempts. Let $\FAR^*=s/N$ be the \emph{point estimation} for the false acceptance rate. We can only be absolutely certain that $\FAR\in(0\%,100\%)$ but, roughly speaking, it is not very likely that the true false acceptance rate differs from $\FAR^*$ too much. To estimate the confidence of $\FAR^*$, a useful concept is the one of \emph{confidence intervals}.

\begin{defn}[Confidence Interval]
Let $\FAR$ be the system's true (but unknown) false acceptance rate. For a fixed $\gamma\in(0\%,100\%]$ let $\FAR_0\leq\FAR_1$ such that $\FAR^*\in[\FAR_0,\FAR_1]$ for $100\gamma\%$ of all point estimations $\FAR^*$. The interval $[\FAR_0,\FAR_1]$ is called \emph{$\gamma$-confidence interval} for $\FAR$. $\gamma$ is called \emph{confidence level}\footnote{A popular choice for a confidence level is $\gamma=95\%$.} of the interval $[\FAR_0,\FAR_1]$.
\end{defn}

There are methods that compute confidence intervals for a given confidence level $\gamma$ when $s$ false accepts within $N$ impostor recognition attempts have been observed. These are, for instance, the \emph{Clopper-Pearson intervals} \cite{bib:ClopperPearson1934}. 

\begin{example}
\label{example:FAAInterval}
The $95\%$-Clopper-Pearson confidence interval for the false acceptance rate in Example \ref{example:FARPointEstimation} is $[0.36\%,0.81\%]$, i.e. if $s=27$ false accepts among $N=4,856$ impostor recognition attempts have been observed. As a consequence, with a confidence of $95\%$, the expected time needed to perform a successful false-accept attack is between $\approx 4.23\seconds$ and $\approx 9.33\seconds$.
\end{example}

\subsubsection{Rule of Three}
\label{sec:RuleOfThree}
Assume we observed $s=0$ false accepts among $N$ impostor recognition attempts. Even if a point estimation yields a false acceptance rate of $0\%$ this estimation is not very confident. The \emph{rule of three} enables an easy way to estimate a $95\%$-confidence interval in this case (see \cite{bib:HanleyLippman-Hand1983,bib:JocanovicLevy1997}).
\begin{thm}[Rule of Three]
The interval $[0,3/N]$ is a confidence interval of confidence level at least $95\%$.
\end{thm}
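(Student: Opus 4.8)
The plan is to model the $N$ impostor recognition attempts as $N$ independent Bernoulli trials, each succeeding (i.e. resulting in a false accept) with probability $p=\FAR$, and to exploit the fact that observing $s=0$ false accepts is a one-sided event. First I would recall the definition of a confidence interval applied to this situation: we want an interval $[0,u]$ (with $u$ possibly depending on the observed data, but here the data is fixed at $s=0$) such that, whenever the true false acceptance rate actually lies outside the interval — that is, whenever $p>u$ — the probability of nonetheless observing the outcome on which we report $[0,u]$ is small. Concretely, the coverage requirement reduces to: if $p>u$, then the chance of seeing $s=0$ must be at most $5\%$. So the key inequality to establish is that $u=3/N$ makes $(1-p)^N\le 0.05$ whenever $p> 3/N$.

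The central step is therefore the elementary estimate $(1-p)^N\le e^{-pN}$, valid for all $p\in[0,1]$, which follows from $1-x\le e^{-x}$. Plugging in the boundary value $p=3/N$ gives $(1-3/N)^N\le e^{-3}\approx 0.0498<0.05$, and since $(1-p)^N$ is decreasing in $p$, the same bound holds for every $p\ge 3/N$. Consequently, under any true rate $\FAR>3/N$ the probability of observing $s=0$ false accepts is below $5\%$; equivalently, the event ``the reported interval $[0,3/N]$ covers $\FAR$'' fails with probability less than $5\%$ when we are in the regime where it could fail, so the interval has confidence level at least $95\%$. (If $\FAR\le 3/N$ the interval trivially contains it, contributing nothing to the error probability.) One should also note the mild degenerate case $N\le 3$, where $3/N\ge 1$ and the interval $[0,1]$ is vacuously a valid confidence interval; this is consistent with the claim.

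I expect the only subtlety — it is more an expository than a mathematical obstacle — to be phrasing the coverage statement correctly for a \emph{one-sided} interval built from a single fixed observation: one must be careful that ``$100\gamma\%$ of all point estimations lie in $[\FAR_0,\FAR_1]$'' in the Definition is being read as ``$\mathbb{P}_{\FAR}(\text{data leads to an interval covering }\FAR)\ge\gamma$'', and that for $s=0$ this probability is exactly $1$ when $\FAR\le 3/N$ and equals $1-(1-\FAR)^N$ when $\FAR>3/N$, the latter being $\ge 1-e^{-3}\ge 0.95$. Once this identification is made, the proof is just the two lines above: invoke $1-x\le e^{-x}$, evaluate at $p=3/N$, and use monotonicity. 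No appeal to the Clopper--Pearson construction is needed, though one can remark that $[0,3/N]$ is a convenient closed-form relaxation of the exact Clopper--Pearson upper limit $1-0.05^{1/N}$, which it always contains.
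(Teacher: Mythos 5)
Your proposal is correct and complete. Note, however, that the paper itself gives no proof of this theorem: it states the Rule of Three and simply cites the literature (Hanley and Lippman-Hand; Jovanovic and Levy), so there is no in-paper argument to compare against. Your derivation is the standard one and it is sound: the coverage requirement reduces to showing that $(1-p)^N\le 0.05$ whenever $p>3/N$, which follows from $1-x\le e^{-x}$ and monotonicity, since $(1-3/N)^N\le e^{-3}\approx 0.0498<0.05$; the cases $\FAR\le 3/N$ and $N\le 3$ are handled correctly as trivial. You were also right to flag the one genuine subtlety, which is definitional rather than mathematical: the paper's Definition of a confidence interval is phrased in terms of the point estimate $\FAR^*$ landing in a fixed interval, which read literally is an acceptance region rather than a confidence interval and would not support the theorem (e.g.\ $P(s\le 3)$ is not $\ge 0.95$ for large $p$). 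Your re-reading as the standard coverage statement $\mathbb{P}_{\FAR}(\text{the reported interval covers }\FAR)\ge\gamma$ is the intended one and is what the cited sources prove. Your closing remark that $[0,3/N]$ contains the exact one-sided Clopper--Pearson limit $1-0.05^{1/N}$ is a nice way to tie the theorem back to the Clopper--Pearson intervals the paper uses elsewhere.
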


\subsubsection{Evaluation}

\begin{table*}[btp]
\begin{center}
\caption{Performance of the false-accept attack against the implementation of Section \ref{sec:MinutiaeFuzzyVault} using the parameters of the performance evaluation. The timings are have been determined on a 3.2\Ghz\ desktop computer with four processor cores.}
\label{tab:FARAttack}
\begin{footnotesize}
\begin{tabular}{|c||c|c|c|c|c|}
\hline
            &             &                 &                          &                                &                        \\
 length     & point       & avg.            & $95\%$-                  & {\bf expected}                 & expected               \\
 of         & estimation  & impostor        & confidence               & {\bf time}                     & time                   \\
 secret     & of the      & decoding        & interval                 & {\bf for a}                    & for a                  \\
 pol.       & false       & time            & for the                  & {\bf successful}               & successful             \\
            & acceptance  &                 & false                    & {\bf false-}                   & brute-                 \\
            & rate        &                 & acceptance               & {\bf accept}                   & force                  \\
            &             &                 & rate                     & {\bf attack}                   & attack                 \\
 $k$        & $\FAR^*$    & $\IDT$          & $[\FAR_0,\FAR_1]$        &                                &                        \\
            &             &                 &                          &                                &                        \\
\hline\hline
            &             &                 &                          &                                &                        \\
$=7$        & $188/4,856$ & $0.08\seconds$  & $[3.33\%,4.45\%]$        & $0.31\seconds$--$0.41\seconds$ & $\approx 11.4\seconds$ \\
$=8$        & $79/4,856$  & $0.140\seconds$ & $[1.29\%,2.02\%]$        & $1.19\seconds$--$1.87\seconds$ & $\approx 2.97\minutes$ \\
$=9$        & $27/4,856$  & $0.198\seconds$ & $[0.37\%,0.81\%]$        & $4.22\seconds$--$9.33\seconds$ & $\approx 49.4\minutes$ \\
$=10$       & $8/4,856$   & $0.240\seconds$ & $[0.07\%,0.32\%]$        & $12.8\seconds$--$58.4\seconds$ & $\approx 13.9\hours$   \\
$=11$       & $5/4,856$   & $0.248\seconds$ & $[0.03\%,0.24\%]$        & $17.9\seconds$--$2.14\minutes$ & $\approx 10.2\days$    \\
$=12$       & $0/4,856$   & $0.193\seconds$ & $[0.00\%,0.06\%]$        & $>54.2\seconds$                & $\approx 6.6\months$   \\
            &             &                 &                          &                                &                        \\
\hline
\end{tabular}
\end{footnotesize}
\end{center}
\end{table*}

\begin{example}
\label{example:FAARuleOfThree}
Assume an adversary has intercepted a minutiae fuzzy vault as in Section \ref{sec:MinutiaeFuzzyVault} with $k=12$. The rule of three states that (with confidence $95\%$) we can only expect the true false-acceptance rate to be $\FAR\approx 0.06\%$. Thus, with an impostor decoding time of $\IDT\approx 0.193\seconds$ using Formula (\ref{eq:FAASuccessTime}) the adversary can expect to successfully break the vault after $\approx 3\minutes$ $37\seconds$. If four processors/cores are used in parallel he may be successful even after $\approx 54.2\seconds$.
\end{example}

For the implementation in Section \ref{sec:MinutiaeFuzzyVault} we estimated the expected computational time of a successful false-accept attack. The way we estimated the expected times is analogous to the estimations in Example \ref{example:FAAInterval} and Example \ref{example:FAARuleOfThree}. The results can be found in Table \ref{tab:FARAttack}. 

\subsubsection{Evaluation against Alignment-Free Fuzzy Fingerprint Vault}
For the alignment-free fuzzy fingerprint vault implementation of Li et al. (2010) \cite{bib:LiEtAl2010} where $(n,t,k)=(440,40,13)$ a false-acceptance rate of $0.04\%$ was reported.\footnote{We refer to Table 3 and 4 in \cite{bib:LiEtAl2010} in where the \emph{sum rule} is used for similarity measurement between vault features and query features.} The authors estimate the false-acceptance rate as a point estimation by observing $N=34,650$ impostor authentication attempts.\footnote{\label{fn:NotStatisticallyIndependent} Actually, these are not statistically independent.} Therefore, we assume that $s=14$ false-accepts were observed in their experiment. The $95\%$-Clopper-Pearson confidence interval for the false-acceptance rate thus is $[0.0221\%,0.0678\%]$. Furthermore, the authors report an average decoding time of $0.192\seconds$.\footnote{The decoding times were reported for genuine authentication attempts only. For simplicity, we assume that it agrees with impostor decoding time.} Consequently, using Formula (\ref{eq:FAASuccessTime}) a false-accept attack may consume between $49\seconds$ and $2.51\minutes$ of computer time if four processors/cores are used. In comparison to the brute-force, which is expected to require $\approx 20\years$ (see Table \ref{tab:BFAttack}), the time for a successful false-accept attack is negligible. Moreover, the time is far away from being acceptable for a secure system.

For $k=14$ no false-accepts were observed by the authors. The rule of three (see Section \ref{sec:RuleOfThree}) states that (with a confidence of $95\%$) the true false acceptance rate is $<0.0087\%$. Assuming $\IDT=0.192\seconds$ we can only expect the false-accept attack to require approximately $6.4\minutes$ which strongly contrasts an alleged security of $52$ bits. 

\subsubsection{Evaluation against Fuzzy Vault with Minutiae Descriptors}

If a vault was intercepted by an intruder in where the ordinate values are protected with minutiae descriptors (see Section \ref{sec:HybridVault}) the false-accept attack can be run without modifications. For example, if $k=12$ using the $(15,5)$-BCH code, no false accepts have been observed within $9,900$ impostor authentication attempt in \cite{bib:Nagar2010}.\textsuperscript{\ref{fn:NotStatisticallyIndependent}} Thus, with the rule of three we can only expect the true false acceptance rate to be $\FAR\leq 0.03\%$. By our experiments (see Table \ref{tab:MinutiaeFuzzyVaultPerformance}) we assume an average impostor decoding time of $\IDT=0.193\seconds$. Consequently, we can only expect a false-accept to last $\log(0.5)/\log(1-\FAR)\cdot\IDT\approx 7\minutes$. If all four processors are used in parallel, the time furthermore reduces to $\approx 2\minutes$.

Let us discuss another interesting point. In \cite{bib:Nagar2010} it is reported that if the $(511,19)$-BCH code is used to protect the ordinate values of the construction of \cite{bib:NandakumarJainPankanti2007} the false acceptance rate drops from $0.7\%$ to $0.01\%$. At a first glance, this may lead to the conclusion that the security is improved by a factor of $\approx 70$. But this is not true: An adversary may decouple the basic vault construction from the protected ordinate values due to a very low sphere packing density. More precisely, the expected number of a protected vault point's ordinate value is estimated as $S=1+4.25\cdot 10^{-29}$ (see Section \ref{sec:HybridVaultBFAttack}). Assuming that each vault point's correct ordinate value is contained in the candidate set, we set $S'=S-1$ as the expected number of wrong ordinate values. Using \emph{Markov's inequality}, the probability that there is at least one wrong candidate is less than $S'=4.25\cdot 10^{-29}$. Thus, the probability that decoupling the protected ordinate values to yield an instance of the basic vault without protected ordinate values is $(1-S')^n=(1-S')^{224}\approx 1-9.52\cdot 10^{-27}$ which is overwhelming. Consequently, if the sphere packing density of the underlying error-correcting code is too small, protecting the ordinate values causes virtually no improvement against the false-accept attack.

\subsection{Intermediate Discussion}
Our investigations clearly show that biometric cryptosystems that are based on a single fingerprint cannot provide sufficient security --- unless the false acceptance is reduced to a cryptographic negligible level: It is very easy to break a single fuzzy fingerprint vault using the false-accept attack. This highly advocates that a secure fingerprint cryptosystem must be based on multiple finger --- or even finger in combination with other biometric modalities.

There remain problems with the fuzzy fingerprint vault that can not be solved merely by switching to multiple fingers and that have to be resolved first. These are the problems of cross-matching and the correlation attack.

\subsection{Cross-Matching and the Correlation Attack}
\label{sec:CorrelationAttack}
One of the most serious risks the fuzzy fingerprint vault is concerned with is its high vulnerability to cross-matching.

\begin{figure}[!ht]
\subfigure[]{\includegraphics[width=0.3\textwidth]{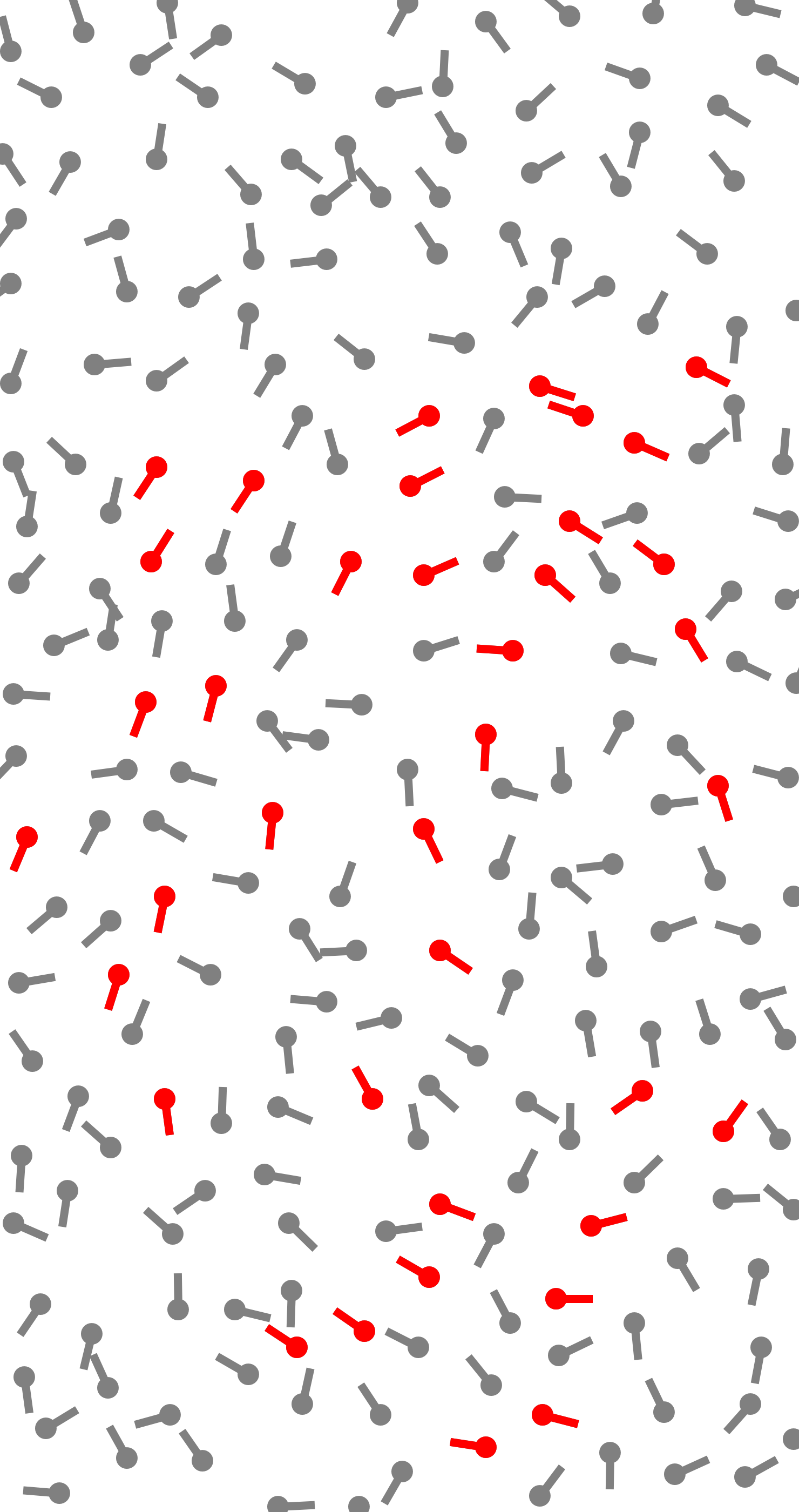}}\hspace{0.025\textwidth}\subfigure[]{\includegraphics[width=0.3\textwidth]{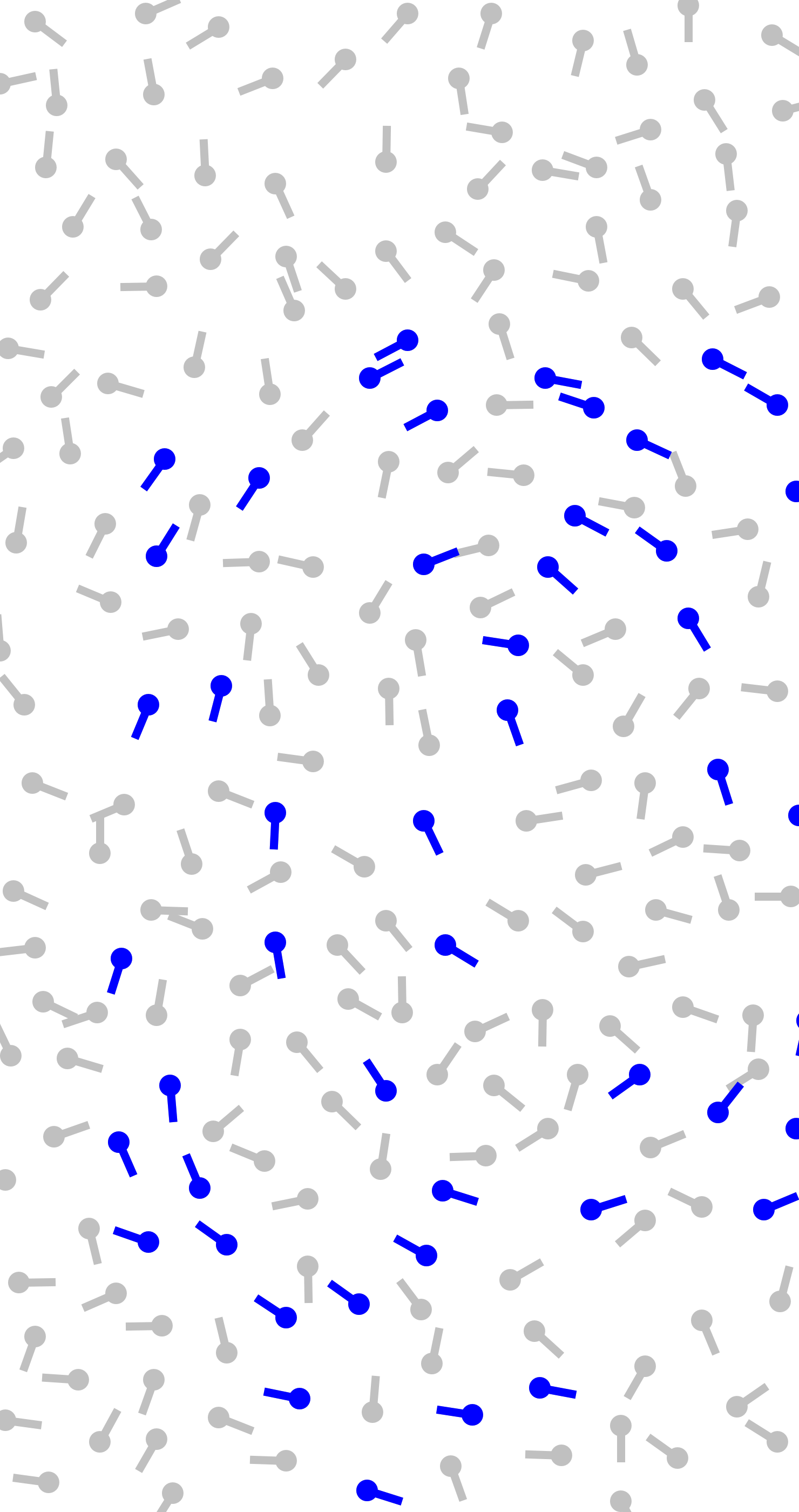}}\hspace{0.025\textwidth}\subfigure[]{\includegraphics[width=0.3\textwidth]{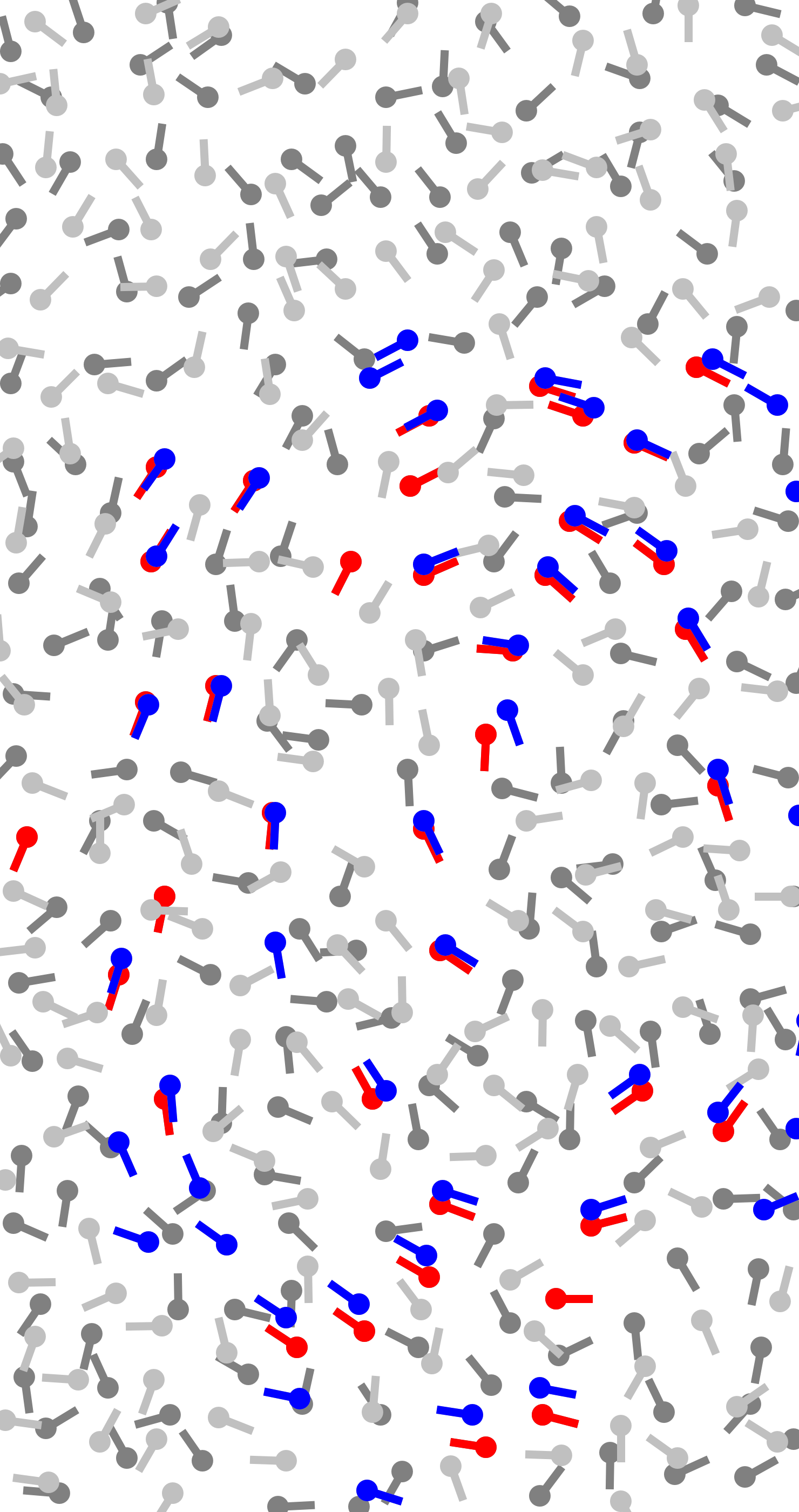}}
\caption{(a), (b) Two aligned vaults with chaff minutiae (gray and light-gray) and genuine minutiae (red and blue). (c) The genuine minutiae have a bias to be in agreement.} 
\label{fig:CorrelationAttack}
\end{figure}

Cross-matching is always possible by, for instance, the brute-force attack: One of the vaults is attacked to reveal its template; this template is then used to open the other vault; if successful, both vaults are considered to match. While such an approach is always possible, there exists a more efficient method to separate genuine points from chaff points, if two vaults protecting the same finger are given: By correlating the vaults, genuine minutiae have a bias to be in agreement in both vaults while chaff minutiae are likely to be separate. An example illustrating this approach is given by Figure \ref{fig:CorrelationAttack}.

While correlation has the inadvertent effect that vault records can be cross-matched, it is even possible for the attacker to efficiently break two vault's templates and keys given the vaults protect templates from the same finger as it was first supposed by Scheirer and Boult in 2007 \cite{bib:ScheirerBoult2007}. As a consequence, separating genuine points from chaff points via correlation has the potential to be much more efficient than merely attacking one of the vaults via brute-force. 

In 2008, Kholmatov and Yanikoglu \cite{bib:KholmatovYanikoglu2008} have demonstrated the effectiveness of the correlation attack. Against the fuzzy vault construction of Uludag et al. (2005) \cite{bib:UludagPankantiJain2005}, they experimentally observed $59\%$ successful recoveries using the correlation attack against $200$ matching vault correspondences. Moreover, the authors were able to perform the correlation attack within $50\seconds$ on average using a non-optimized Matlab implementation on a 3\Ghz\ CPU. In comparison to the brute-force attack, which is expected to last $\approx 80\hours$ on a single core of a 3.2\Ghz\ desktop computer (see Table \ref{tab:BFAttack}), an intruder who has intercepted two matching vault records from different applications may quickly recover the corresponding templates and keys --- even if he has no large database to perform a false-accept attack. 

Cross-matching might already be enabled just by matching alignment helper data (see \cite{bib:LiEtAl2008,bib:UludagJain2006,bib:NandakumarJainPankanti2007}) even though this alone does not imply that multiple records of the same template can be broken efficiently. While the possibility of cross-matching using alignment data alone is already an security issue, it is especially an issue in combination with the correlation attack: An adversary may filter out genuine vault correspondences from different application's databases with the help of the public alignment data; afterwards, he can perform the correlation attack even faster because he can quickly align the vaults using the alignment data. Moreover, merely using alignment-free features as proposed by Li et al. (2010) \cite{bib:LiEtAl2010} will not resolve the risk of cross-matching or attacks via record multiplicity.

Nandakumar, Nagar, and Jain (2008) \cite{bib:NandakumarNagarJain2007} proposed to incorporate an additional user password into the vault. Furthermore, the additional security provided by the user passwords may prevent the vaults from being cross-matched and from being vulnerable to the correlation attack. However, using a user password causes inconveniences that were actually meant to be resolved by biometric based authentication schemes (e.g., weak or forgotten passwords). 

In the next section we show that it is possible to implement a usable fingerprint fuzzy vault that is resistant against the correlation attack and that gets along without an additional user password. 

\section{Implementation of a Cross-Matching Resistant Minutiae Fuzzy Vault}
\label{sec:CMRFV}
We have shown that a single finger is not sufficient to provide a secure biometric cryptosystem due to a cryptographically non-negligible false acceptance rate. Rather biometric cryptosystems that are based on multiple finger/modalities should be developed and analyzed more extensively. First steps have already been made (e.g., see \cite{bib:MerkleEtAl2010b,bib:NagarNandakumarJain2012}). However, it is obvious that merely fusing multiple finger to be protected by the fuzzy vault scheme will not resolve the problem of cross-matching or the correlation attack.

In this section, we propose an implementation of a minutiae fuzzy vault that is inherently resistant against cross-matching and that gets along without an additional password (see \cite{bib:NandakumarNagarJain2007}). Roughly speaking, we achieve cross-matching resistance using the simple idea of rounding minutiae to a rigid hexagonal grid; the minutiae angles are quantized as well. Each element of the rigid system to where a minutia is quantized encodes a genuine vault point while the remaining elements encode chaff points. As a consequence the feature set between different vault records are equal which makes cross-matching via correlation useless to attack the vaults.

\begin{figure}[!ht]
\subfigure[]{\includegraphics[width=0.3\textwidth]{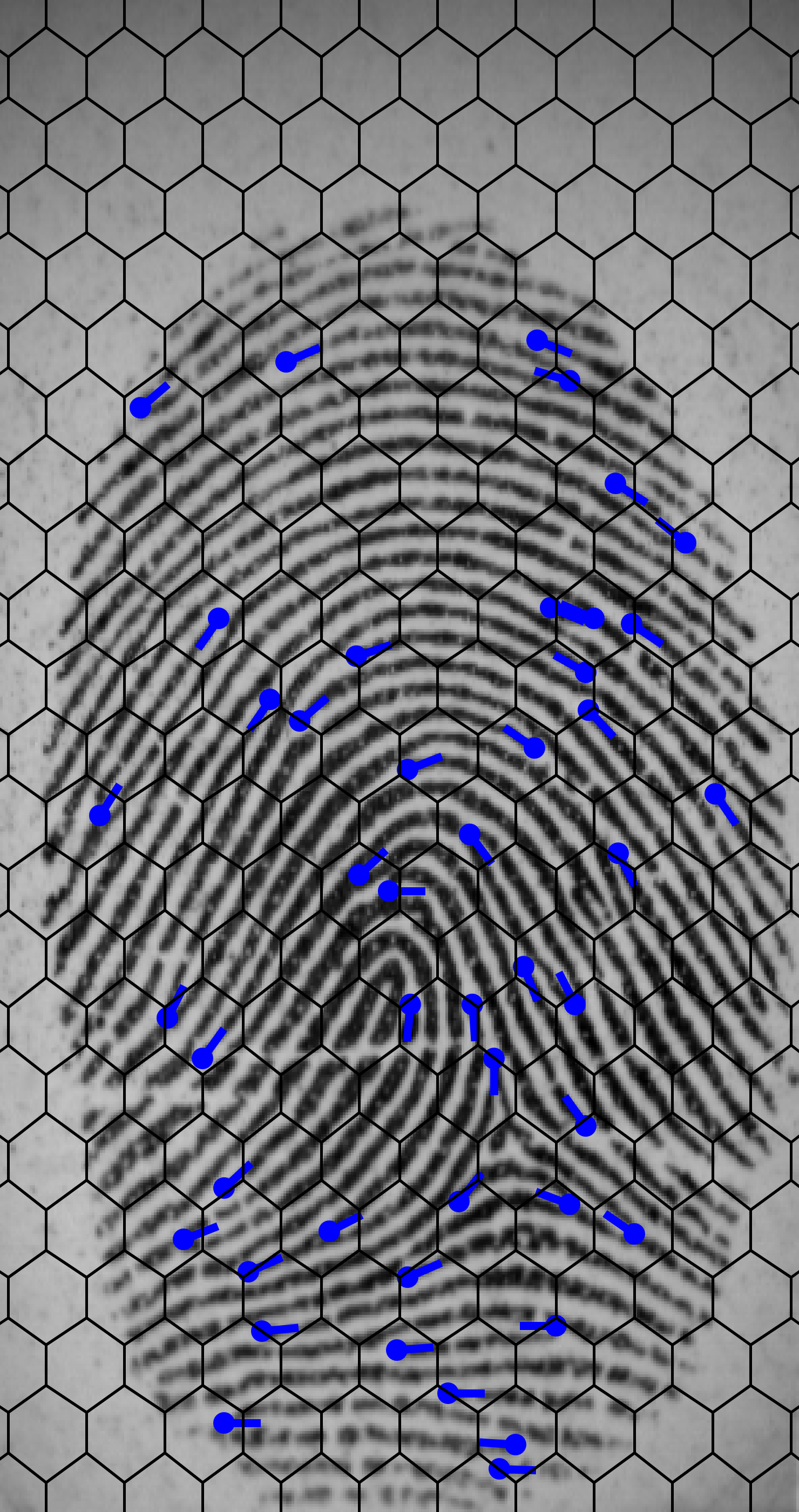}}\hspace{0.025\textwidth}\subfigure[]{\includegraphics[width=0.3\textwidth]{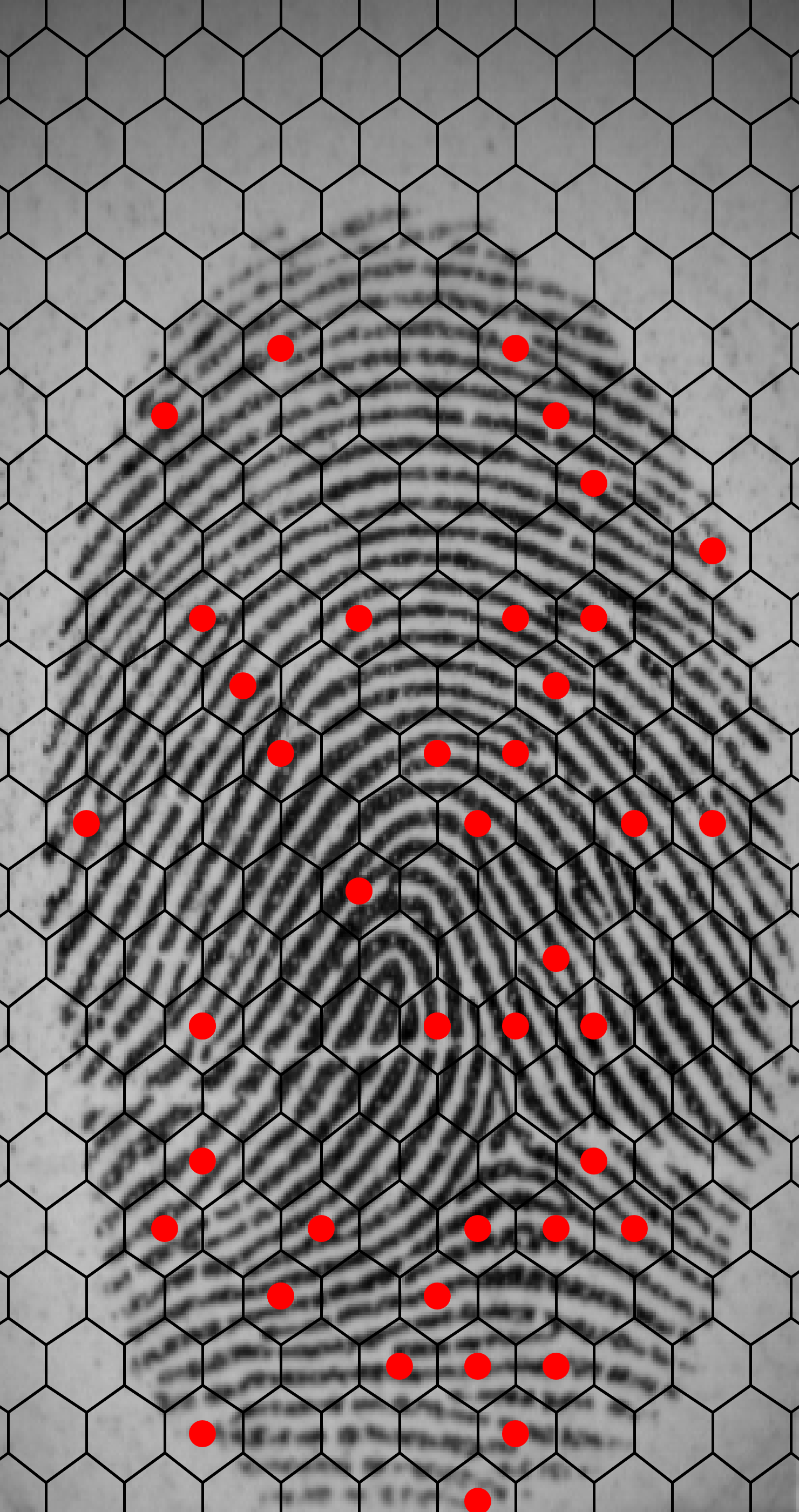}}\hspace{0.025\textwidth}\subfigure[]{\includegraphics[width=0.3\textwidth]{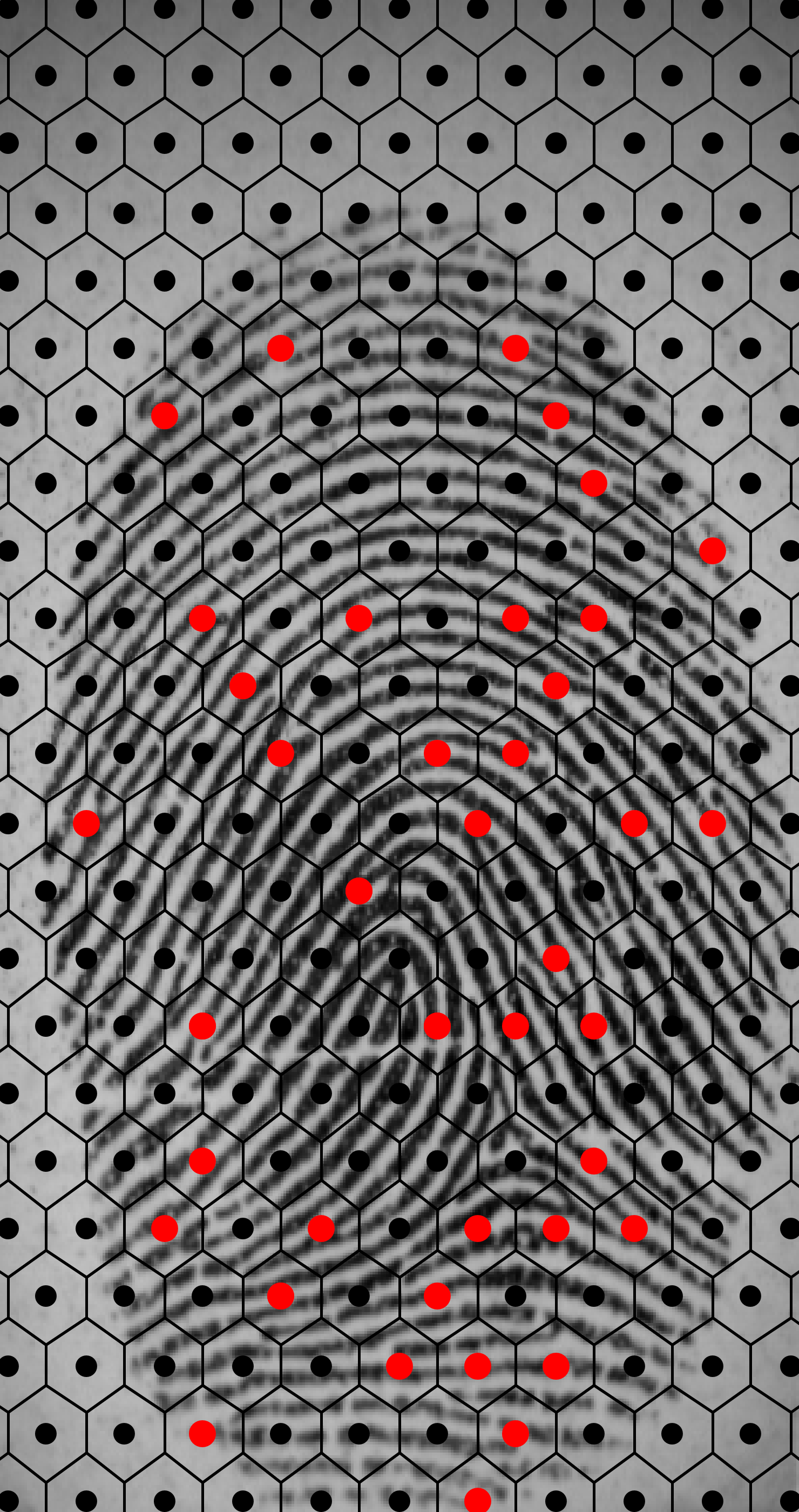}}
\caption{The minutiae (a) positions are rounded to the location of the points of a hexagonal grid (b). Each other point of the grid (c) is used to encode a chaff point.}
\label{fig:HexaVault}
\end{figure}

\subsection{Vault Construction}
\subsubsection*{Minutia Quantization}
Given a minutiae template of a fingerprint, each of its minutia is quantized first. Let $\minutia=(a,b,\theta)$ be a minutia at pixel $(a,b)$ and of angle $\theta\in[0,360)$. Let $R_i$ be the point of a (hexagonal) grid $\{R_0,\hdots,R_{r-1}\}$ laying within the fingerprint image's region that best approximates $(a,b)$. Furthermore, let $j=\lfloor\theta/360\cdot s\rfloor$ where $s$ denotes the parameter controlling the number of values into where angles are quantized. Now, the integer $i+r\cdot j$ encodes the quantization of $\mathfrak{m}$. Let $x_{i,j}\in\F$ denote the finite field element encoding $i+r\cdot j$ by some (but fixed) convention. Then the quantization of the minutia $\mathfrak{m}$ is given by the map $\quant(\mathfrak{m})\mapsto x_{i,j}$.

Note, that the feature set in where minutiae quantizations can occur is ${\bf E}=\{~x_{i,j}~|~i=0,\hdots,r-1,~j=0,\hdots,s-1\}$.

\subsubsection*{Enrollment}
Let $\tmax$ be a bound on the genuine point's size and $\T$ be an input minutiae template that we want to protect. Write $\T=\{\minutia_1,\minutia_2,\hdots\}$ and assume that if $\minutia_i$ is of better quality than $\minutia_j$ this implies $i<j$. The feature set $\A$ is defined to contain at most $\tmax$ quantizations of the first best-quality minutiae. Note, that $\A$ can contain fewer elements than $\T$.\footnote{If there are minutiae in $\T$ that have equal quantization then it is possible that $|\A|<|\T|$.} Let $t=|\A|$. 

The next step is to bind the template quantized as $\A$ to a secret polynomial $f\in\F[X]$ of degree $<k$. This is done as usual by letting the genuine set $\G=\{~(x,f(x))~|~x\in\A~\}$. 

As in the usual vault, the genuine set is hidden among a large set of chaff points. But now, every element in ${\bf E}$ that is not contained in $\A$ corresponds to a chaff point. More precisely, $\C=\{~(x,y)~|~x\in\E\setminus\A~\}$ where the $y$s are chosen uniformly at random from $\F$ such that $y\neq f(x)$.

The vault consists of the union of genuine and chaff points. Furthermore, a cryptographic hash value of the secret polynomial is stored along with the vault. Thus the public vault is the tuple $(\V,h(f))$ where $\V=\G\cup\C$.

\subsubsection*{Vault Authentication}
On authentication, a query minutiae template is given for which we assume that it is already aligned to the vault. Then the corresponding feature set $\B$ is extracted from the query template in the same way as $\A$ was extracted from the enrollment template. Using $\B$ the unlocking set is built out of those points from $\V$ that have abscissa value in $\B$, i.e. $\U=\{~(x,y)\in\V~|~x\in\B~\}$.

Let $\omega=|\A\cap\B|$. Then $\U$ contains exactly $\omega$ genuine points. Thus, if $\omega\geq k$ the secret polynomial $f$ can be obtained from $\U$ in the same way as described in Section \ref{sec:OrdinaryDecoder}.

\subsection{Training}
Our construction is controlled by the following parameters:
\begin{itemize}
\item The minimal distance of the hexagonal grid points $\lambda$ which (together with the fingerprint image's dimension) controls the number of hexagonal grid points $r$;
\item the number of values $s$ into where the minutia's angle are quantized;
\item the bound $\tmax$ on the number of genuine vault points;
\item the size $k$ of the secret polynomial.
\end{itemize}
We performed systematical tests to determine a good configuration of the above parameters. Therefore we determined the $\GAR$s and the $\FAR$s on the FVC 2002 DB2-B (which is intended for training purposes; see \cite{bib:FVC2002}) for each configuration of
\begin{align}
\begin{split}
\lambda=8,\hdots,32;&\quad s=1,\hdots,8;\\
\tmax=10,\hdots,60;&\quad k=1,\hdots,\tmax.
\end{split}
\end{align}
For the $\GAR$s, each finger's $i$th impression was used to extract the feature set $\A$; each $j$th (where $j>i$) impression of the finger aligned to the $i$th was used to extract its features $\B$; if $|\A\cap\B|\geq k$ this was accounted for as a genuine accept; otherwise it was accounted for as a false reject. For the $\FAR$, each $I$th finger's first impression was used to extract the feature set $\A$; each $J$th finger's first impression (where $J>I$) was used to extract $\B$; again, if $|\A\cap\B|\geq k$ this was counted as a false accept; otherwise as a reject.

The best configuration\footnote{The best configuration was defined as to yield the highest $\GAR$ at the lowest $\FAR$; among these configurations, the one with maximal $k/\tmax$ has been selected.} was obtained as 
\begin{align}
\lambda=29,~s=6,~\tmax=44,~\text{and}~k=7
\end{align}
with a $\GAR=100\%$ and $\FAR=0\%$. The number of hexagonal grid points of minimal distance $\lambda=29$ that fit in an image of dimension $296\times 560$ is $r=242$. For the angles are quantized into $s=6$ possible values, the size of the vault is $n=r\cdot s=1452$. Thus, the brute-force security is ${\bf bf}(1452,44,7)\approx 2^{36}$. If a higher security is sought, we may choose a higher $k$.

\subsection{Randomized Decoder}
The potential recognition performance of our construction looks promising. However, there remains a problem concerning the decoding work. If a brute-force security at least $2^{40}$ is sought we may choose $k=8$. On an authentication attempt, an unlocking set of size up to $\tmax=44$ is built. If we would attempt to decode by iterating through all candidate polynomials of degree $<k$ that interpolate $k$ unlocking points, $({44\atop 8})\approx 2^{27}$ iterations have to be performed in the worst case before the user is possibly accepted/rejected. This is too expensive for a usable system.

As a countermeasure, we propose to randomize the decoding procedure of Section \ref{sec:OrdinaryDecoder}. Instead of iterating through all polynomials of degree $<k$ that interpolate $k$ unlocking points, we only iterate through at most $\D$ polynomials each interpolating $k$ randomly selected unlocking points.

On authentication, if the unlocking set $\U$ contains $\omega\geq k$ genuine points the randomized decoder will successfully output the correct polynomial with probability at least $1-(1-{\bf bf}(\tmax,\omega,k)^{-1})^\D$ which approaches $100\%$ as $\D\rightarrow\infty$. Moreover, if the unlocking set $\U$ contains $\omega<k$ genuine points the randomized decoder will not succeed in decoding. Consequently, both $\GAR$ and $\FAR$ for a fixed $k$ drop if the randomized decoder is used.

Furthermore, the use of the randomized decoder only affects authentication and not the vault construction. Thus, for a fixed $k$, the overall security does not suffer if the randomized decoder is used.

\subsection{Performance Evaluation}
\label{sec:CMRFVEvaluation}
\begin{table*}[btp]
\begin{center}
\caption{Result of the Performance Evaluation}
\label{tab:MinutiaeFuzzyExtractorPerfomanceEvaluation}
\begin{footnotesize}
\begin{tabular}{|c||c|c|c|c|c|}
\hline
                   &                             &                       &                            &                             &                       \\
 pol.              & genuine                     & false                 & avg.                       & avg.                        & brute-                \\
 degree            & acceptance                  & acceptance            & genuine                    & impostor                    & force                 \\
                   & rate                        & rate                  & decoding                   & decoding                    & security              \\
                   &                             &                       & time                       & time                        &                       \\
                   &                             &                       &                            &                             &                       \\
     $<k$          & $\GAR$ ($\subGAR$)          & $\FAR$                & $\GDT$                     & $\IDT$                      & ${\bf bf}(1452,44,k)$ \\
                   &                             &                       &                            &                             &                       \\
\hline\hline
                   &                             &                       &                            &                             &                       \\
 $=7$              & $\approx 91.96\%$ ($=97\%$) & $\approx 0.46\%$      & $\approx 0.03\seconds$     & $\approx 0.28\seconds$      & $\approx 2^{36}$      \\
 $=8$              & $\approx 86.82\%$ ($=95\%$) & $\approx 0.081\%$     & $\approx 0.06\seconds$     & $\approx 0.35\seconds$      & $\approx 2^{43}$      \\
 $=9$              & $\approx 80.36\%$ ($=93\%$) & $\approx 0.02\%$      & $\approx 0.1\seconds$      & $\approx 0.41\seconds$      & $\approx 2^{47}$      \\
 $=10$             & $\approx 72.61\%$ ($=91\%$) & $=0\%$                & $\approx 0.17\seconds$     & $\approx 0.51\seconds$      & $\approx 2^{52}$      \\
 $=11$             & $\approx 63.11\%$ ($=86\%$) & $=0\%$                & $\approx 0.26\seconds$     & $\approx 0.60\seconds$      & $\approx 2^{57}$      \\
 $=12$             & $\approx 53.57\%$ ($=80\%$) & $=0\%$                & $\approx 0.39\seconds$     & $\approx 0.73\seconds$      & $\approx 2^{63}$      \\
                   &                             &                       &                            &                             &                       \\
\hline
\end{tabular}
\end{footnotesize}
\end{center}
\end{table*}

For the configuration determined during the training and for different $k$, we performed performance evaluations of our implementation following the description of Section \ref{sec:DatabaseAndProtocol}. We have chosen $\D=2^{16}$ iterations for the randomized decoder which corresponds to a reasonable amount of iterations that is feasible on current hardware. In addition to genuine acceptance rates and false acceptance rates, the average genuine decoding times as well as the average impostor decoding times were determined. The results can be found in Table \ref{tab:MinutiaeFuzzyExtractorPerfomanceEvaluation}. 

In order to compare our results with other fuzzy vault implementations, we also kept track of the genuine acceptance rate in which only the first two impressions are taken into account (the first impression is used for enrollment and the second as the query). The corresponding rates are denoted as $\subGAR$ in Table \ref{tab:MinutiaeFuzzyExtractorPerfomanceEvaluation}. We reached $\subGAR=91\%$ in the case no false accepts have been observed. In comparison, on the same dataset Nagar et al. (2010) \cite{bib:Nagar2010} achieve $\subGAR=93\%$ at zero false accepts while Li et al. (2010) \cite{bib:LiEtAl2010} achieve $\subGAR=92\%$. Even though our results are only valid under a well-solved alignment framework our implementation provides resistance against the correlation attack --- even without a user password (see \cite{bib:NandakumarNagarJain2007}).

\subsection{Alternative Fuzzy Extractor Construction}
Another advantage of our implementation is that it can be easily modified to meet the requirements for the modified fuzzy vault construction proposed by Dodis et al. (2008) \cite{bib:DodisEtAl2008}. This construction avoids the generation of chaff points and significantly reduces the amount of memory that is required for storage. The changes that have to be made would not affect the construction's performance or security against the brute-force or false-accept attack. For details of the construction we refer to \cite{bib:DodisEtAl2008}.

However, without preventions, multiple records of the fuzzy extractor construction may become vulnerable to cross-matching, especially, if the protected templates are equal. The way of cross-matching is similar to the \emph{decodability attack} as it has been investigated by Kelkboom et al. (2011) \cite{bib:KelkboomEtAl2011}. Fortunately, applying a random bit-permutation process secures the fuzzy extractor construction from cross-matching based on the decodability attack. For details we refer to \cite{bib:KelkboomEtAl2011}.

\subsection{Security Analysis}
Our implementation provides good security against the brute-force attack. For example, if $k=10$ at a $52$-bit brute-force security level we have empirically determined that an adversary can test $128,205$ polynomials per second on a single core of a 3.2\Ghz\ desktop computer with four processor cores. Thus, if all four cores are used in parallel, he can expect to break an instance of our implementation after approximately $192\years$.

Our implementation obviously is resistant against the correlation attack and cross-matching via correlation. But the implementation's vulnerability against the false-accept attack remains to be evaluated.

It is possible to analyze the false-accept attack analogous to Section \ref{sec:FARAttack} using confidence intervals. But there is a more elegant way to estimate the false-acceptance rate.

Assume that within an impostor authentication attempt an unlocking set of size $t$ is built containing $\omega$ genuine vault points. Using $\D$ decoding iterations the vault can be unlocked with probability
\begin{equation*}
p(t,\omega,\D)=\begin{cases}
1-(1-{\bf bf}(t,\omega,k)^{-1})^\D,&\text{if}~\omega\geq k\\
0,&\text{if}~\omega<k.
\end{cases}
\end{equation*}
Thus, if in a test with $N$ impostor authentication attempt the $i$th unlocking set is of size $t_i$ containing $\omega_i$ genuine vault points then we may estimate the false acceptance rate as $\FAR\approx\frac{1}{N}\sum_{i=1}^Np(t_i,\omega_i,\D)$. Note, that the effort in authenticating increases linearly with the number of decoding iterations $\D$. Therefore, we estimate the cost for a successful false-accept attack as
$\log(0.5)/\log(1-\FAR)\cdot\D$. 

As the attacker is free in choosing whichever decoder he prefers, he may choose the number of decoding iterations that minimizes the cost.
\begin{lemma}
The cost for a successful false-accept attack is minimized for $\D=1$.
\end{lemma}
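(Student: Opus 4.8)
The plan is to reduce the statement to a single application of Jensen's inequality, after which only some sign bookkeeping remains; I expect that bookkeeping --- every logarithm involved is negative --- to be the only place where care is actually needed. First I would fix notation. For an impostor attempt $i\in\{1,\dots,N\}$ with $\omega_i\geq k$ set $q_i:=1-{\bf bf}(t_i,\omega_i,k)^{-1}\in[0,1)$, and for the remaining attempts set $q_i:=1$; then $p(t_i,\omega_i,\D)=1-q_i^{\D}$ in every case, so the estimated false acceptance rate obtained with $\D$ decoding iterations is
\[
\FAR(\D)=\frac1N\sum_{i=1}^N\bigl(1-q_i^{\D}\bigr)=1-\frac1N\sum_{i=1}^N q_i^{\D},
\]
and the quantity to be minimized over positive integers $\D$ is the cost $C(\D)=\log(0.5)\cdot\D/\log\bigl(1-\FAR(\D)\bigr)$. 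One may assume $\FAR(1)>0$, since otherwise every $C(\D)$ is infinite and the claim is vacuous; this also gives $0<\FAR(\D)<1$ for all real $\D\geq1$.

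The crux is the following observation: for $\D\geq1$ the map $x\mapsto x^{\D}$ is convex on $[0,1]$, so Jensen's inequality applied to the uniform average over the $q_i$ gives
\[
1-\FAR(\D)=\frac1N\sum_{i=1}^N q_i^{\D}\;\geq\;\Bigl(\frac1N\sum_{i=1}^N q_i\Bigr)^{\!\D}=\bigl(1-\FAR(1)\bigr)^{\D}.
\]
Taking logarithms --- both sides lie in $(0,1)$, so the logarithms are finite and negative --- yields $\log\bigl(1-\FAR(\D)\bigr)\geq\D\,\log\bigl(1-\FAR(1)\bigr)$.

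It remains to feed this into the cost formula while tracking signs. Dividing by $\D>0$ gives $\log(1-\FAR(\D))/\D\geq\log(1-\FAR(1))$; both sides are negative, so taking reciprocals reverses the inequality, and multiplying by $\log(0.5)<0$ reverses it once more, whence
\[
C(\D)=\frac{\log(0.5)\cdot\D}{\log\bigl(1-\FAR(\D)\bigr)}\;\geq\;\frac{\log(0.5)}{\log\bigl(1-\FAR(1)\bigr)}=C(1).
\]
Hence $C$ is minimized at $\D=1$, as claimed. Two remarks I would add: the argument works verbatim for every real $\D\geq1$, not only integers; and since $x\mapsto x^{\D}$ is \emph{strictly} convex for $\D\geq2$, the inequality $C(\D)\geq C(1)$ is strict whenever the $q_i$ do not all coincide (in particular whenever some unlocking set has $\omega_i\geq k$ and another has $\omega_i<k$), so $\D=1$ is in fact the unique minimizer outside of that degenerate case.
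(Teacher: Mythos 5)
Your proof is correct and follows essentially the same route as the paper's: both arguments apply Jensen's inequality to the convex map $x\mapsto x^{\D}$ to obtain $1-\FAR(\D)\geq(1-\FAR(1))^{\D}$ and then carry this through the cost formula, with your version merely tracking the signs of the logarithms explicitly where the paper uses absolute values. The added remarks on strictness and on real $\D\geq1$ are fine but not needed for the claim.
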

\begin{proof}
Let $\epsilon(x)$ be the false-acceptance rate as a function in the number of decoding iterations $x$. Then $g(x)=\log(0.5)/\log(1-\epsilon(x))\cdot x$ is the cost function of a successful false-accept attack. Note that we can write $1-\epsilon(x)=\frac{1}{N}\sum\alpha_i^x$ where $0\leq\alpha_i\leq 1$. Using \emph{Jensen's inequality} we can bound $1-\epsilon(x)\geq\left(\frac{1}{N}\sum\alpha_i\right)^x$. Thus, 
\begin{align*}
g(x)&=\frac{|\log(0.5)|}{|\log(1-\epsilon(x))|}\cdot x\geq\frac{|\log(0.5)|}{\left|\log\left(\left(\frac{1}{N}\sum\alpha_i\right)^x\right)
\right|}\cdot x\\
&=\frac{|\log(0.5)|}{\left|\log\left(\frac{1}{N}\sum\alpha_i\right)\right|\cdot x}\cdot x=\frac{|\log(0.5)|}{\left|\log\left(\frac{1}{N}\sum\alpha_i\right)\right|}=g(1)
\end{align*}
which proves the lemma.
\end{proof}
The lemma enables us to estimate a lower bound for the cost of the false-accept attack against our implementation assuming the adversary also utilizes the randomized decoder. However, the attacker may prefer to use more than only one decoding iteration, e.g., if he uses a fingerprint database for the attack of medium size. Furthermore, the time needed to build the unlocking sets was not taken into account, but it increases the cost for a false-accept attack in practice. But for a security analysis, it is safer to rely on a lower bound.

Hence, in  a test of $N$ impostor authentication attempts in where the $i$th unlocking set was of size $t_i$ containing $\omega_i$ genuine points, the cost for a successful false-accept attack can be estimated as $\log(0.5)/\log(1-\FAR)$ where $\FAR=\frac{1}{N}\sum p(t_i,\omega_i)$ with
\begin{equation*}
p(t_i,\omega_i)=p(t_i,\omega_i,1)=\begin{cases}
{\bf bf}(t_i,\omega_i,k)^{-1}&\text{if}~\omega_i\geq k\\
0&\text{if}~\omega_i<k.
\end{cases}
\end{equation*}

\begin{table}[!ht]
\begin{center}
\caption{Performance of the false-accept attack on a four-core desktop computer with 3.2\Ghz.}
\label{tab:FARAttack2}
\begin{tabular}{|c||c|c|c|}
\hline
                   &                              &                           \\
 polynomial degree & false acceptance rate        & {\bf expected time for a} \\
 $<k$              & $\FAR$                       & {\bf false-accept attack} \\
                   &                              &                           \\
\hline
                   &                              &                           \\
$=7$               & $\approx 8.31\cdot 10^{-8}$  & $\approx 36\seconds$      \\
$=8$               & $\approx 8.87\cdot 10^{-9}$  & $\approx 2\minutes$       \\
$=9$               & $\approx 8.53\cdot 10^{-10}$ & $\approx 21\minutes$      \\
$=10$              & $\approx 6.95\cdot 10^{-11}$ & $\approx 5\hours$         \\
$=11$              & $\approx 4.40\cdot 10^{-12}$ & $\approx 4\days$          \\
$=12$              & $\approx 1.86\cdot 10^{-13}$ & $\approx 120\days$        \\
                   &                              &                           \\           
\hline
\end{tabular}
\end{center}
\end{table}

To estimate the false acceptance rate of our implementation for different $k$, we simulated impostor authentication attempts on the FVC 2002 DB2-A database following the FVC protocol which yielded $N=4,950$ impostor authentication attempts. For the simulations, we used the same configurations as in the performance evaluation. For the $i$th impostor authentication attempt, we quantized the first finger as the set $\A$ and the second as $\B$. Then we let $t_i=|\B|$ and counted $\omega_i=|\A\cap\B|$. Using $p(t_i,\omega_i)$ we estimated the false acceptance rate $\FAR$ for a single decoding iteration and the corresponding cost for the false-accept attack. Consulting the impostor decoding times determined during the evaluation we know how much $2^{16}$ iterations of the false-accept attack cost. We used this information to determine the time for a successful false-accept attack on a 3.2\Ghz\ desktop computer. The results can be found in Table \ref{tab:FARAttack2}. 

\begin{example}
For example, if $k=9$ the false acceptance rate was found to be $\FAR\approx 8.53\cdot 10^{-10}$ for a single decoding iteration. Thus, the attacker can expect to use approximately $\log(0.5)/\log(1-8.53\cdot 10^{-10})\approx 8.13\cdot 10^8$ finger as queries to successfully break the vault. As the time for $2^{16}$ iterations was found to be $\IDT\approx 0.41\seconds$ the time for a successful false-accept attack can be estimated as $8.13\cdot 10^{8}/2^{16}\cdot 0.41\seconds\approx 1$--$2\hours$. If all four processor cores were used in parallel, the time furthermore reduces to $\approx 21\minutes$ which is much more efficient than the brute-force attack requiring $192\years$. Please note, like other implementations our construction is also vulnerable to intensive false-accept attacks.
\end{example}

\section{Discussion and Outlook}
\label{sec:Discussion}
We investigated the security of current implementations of the fuzzy fingerprint vault. We found that, even if the brute-force attack is impractical against some implementations, this does not hold for the false-accept attack. This attack is feasible for every authentication scheme in which the false acceptance rate is non-negligible and thus it is for current implementations of biometric cryptosystem protecting a single fingerprint's template. Even worse, according to our observations, the false-accept attack can be performed much more efficiently than the brute-force attack. One may argue, that it is infeasible for an adversary to establish databases which are of sufficient size to perform intensive false-accept attacks off-line. First, in our view, this can not be prevented having in mind that there exist large databases containing real fingers. Second, the performances of false-accept attacks are hints for the existence of similar efficient statistical attacks. Such attacks may be prevented using multiple fingers or even multiple biometric modalities. Therefore, multi-finger fuzzy vaults should be investigated as a potential method wherever high security is important. And yet a significant risk remains: The correlation attack cannot be prevented merely by using multiple fingers.

Therefore we endeavored to solve the problem of the correlation attack. In this paper we have demonstrated that it is possible to implement a minutiae fuzzy vault that is resistant against the correlation attack without loss of authentication performance. Our implementation primarily relies on the simple innovation of rounding minutiae to a rigid grid while using the entire grid as vault features, thereby preventing attackers from distinguishing genuine from chaff features via correlation. Furthermore, to make vault authentication practical, we proposed to use a randomized decoder rather than systematically iterating through all candidate polynomials. Since the randomized decoder only affects vault authentication and not vault construction, the randomized decoder does not adversely affect vault security. Well conceived, the randomized decoder may be incorporated into a wide variety of fuzzy vault implementations, not only fuzzy vaults with the express purpose of protecting minutiae templates of a single finger. Furthermore, our single-finger fuzzy vault construction that is resistant against the correlation attack may be generalized to a construction that protects multiple fingers. 

All experiments described in this paper can fully be reproduced using software available for download.\footnote{This comprises performance analyses of the brute-force attack, performance evaluations of the minutiae fuzzy vault implementation as in Section \ref{sec:MinutiaeFuzzyVault}, the training for determining a good configuration of our cross-matching resistant minutiae fuzzy vault implementation, its performance evaluations, and a program to analyze our implementation's resistance against the false-accept attack; these are sample programs for a C++ software library that we call \texttt{thimble}; visit \url{http://www.stochastik.math.uni-goettingen.de/biometrics} for downloading its source code.}

We did not propose a mechanism for dealing with alignment for our vault construction. Although it would have been possible to adopt the ideas available in the literature that propose to store additional alignment-helper data publicly with the vault \cite{bib:LiEtAl2008,bib:JeffersArakala2007,bib:NandakumarJainPankanti2007,bib:UludagJain2006,bib:YangVerbaudwhede2005} it is not yet clear how this would affect vault security. Moreover, some of the proposals find accurate alignment via multiple candidate alignments: During authentication, for each candidate alignment an authorization attempt is performed until the correct secret is seen. Translating this method to multiple fingers is problematic because the number of candidate alignments grows exponentially with the number of fingers. Consequently, fingerprint alignment techniques for multi-finger fuzzy vaults should be reconsidered.

Ideally, fingerprints could be pre-aligned. This would make iterations through several candidate alignments obsolete. Moreover, fuzzy vaults protecting accurately pre-aligned fingers do not need to store additional alignment-helper data which can cause unwanted information leakage regarding the corresponding finger. Prealignment of fingerprints is strongly related to the concept of intrinsic coordinate systems \cite{bib:BazenGerez2001b,bib:Hotz2009}. Unfortunately, current methods that extract intrinsic coordinate systems are not robust enough to produce fingerprint pre-alignment of sufficient accuracy. Although challenging, it may be worthwhile to seek more robust methods to extract intrinsic coordinate systems. 

\bibliographystyle{IEEEtran}
\bibliography{IEEEabrv,literature}

\end{document}